\documentclass[a4paper,UKenglish]{lipics-v2019}

\usepackage{microtype}


\bibliographystyle{abbrv} 

\newcommand{\eps}{\varepsilon}
\usepackage{graphicx,amssymb,amsmath}
\usepackage{floatrow}
\usepackage{float}
\usepackage{amsmath}
\usepackage{amsthm}
\usepackage{amssymb}
\usepackage{amsfonts}
\usepackage{color}
\usepackage{algorithm}
\usepackage{caption}
\usepackage{MnSymbol}
\usepackage{multirow}
\usepackage{hhline}

\newcommand{\F}{{\cal F}}

\newcommand{\cD}{{\cal D}}
\newcommand{\cI}{{\cal I}}

\newcommand{\ceil}[1]{\lceil#1\rceil}
\newcommand{\floor}[1]{\lfloor#1\rfloor}

\title{On Approximate Range Mode and Range Selection}

\author{Hicham El-Zein}{Cheriton School of Computer Science, University of Waterloo, Canada}{helzein@uwaterloo.ca}{}{}
\author{Meng He}{Faculty of Computer Science, Dalhousie University, Canada}{mhe@cs.dal.ca}{}{}
\author{J. Ian Munro}{Cheriton School of Computer Science, University of Waterloo, Canada}{imunro@uwaterloo.ca}{}{}
\author{Yakov Nekrich}{Cheriton School of Computer Science, University of Waterloo, Canada}{ynekrich@uwaterloo.ca}{}{}
\author{Bryce Sandlund}{Cheriton School of Computer Science, University of Waterloo, Canada}{bcsandlund@uwaterloo.ca}{}{}
\authorrunning{H. El-Zein, M. He, I. Munro, Y.Nekrich, and B. Sandlund} 
\Copyright{Hicham El-Zein, Meng He, J. Ian Munro, Yakov Nekrich, Bryce Sandlund}

\ccsdesc[500]{Theory of computation~Data structures design and analysis}
\keywords{data structures, approximate range query, range mode, range median}

\category{}
\relatedversion{}
\supplement{}
\funding{}

\EventEditors{}
\EventNoEds{3}
\EventLongTitle{}
\EventShortTitle{}
\EventAcronym{}
\EventYear{}
\EventDate{}
\EventLocation{}
\EventLogo{}
\SeriesVolume{}
\ArticleNo{} 
\nolinenumbers 

\begin{document}

\maketitle

\begin{abstract}
For any $\epsilon \in (0,1)$, a $(1+\epsilon)$-approximate range mode query asks for the position of an element whose frequency in the query range is at most a factor $(1+\epsilon)$ smaller than the true mode. For this problem, we design an $O(n/\epsilon)$ bit data structure supporting queries in $O(\lg(1/\epsilon))$ time. This is an encoding data structure which does not require access to the input sequence; we prove the space cost is asymptotically optimal for constant $\epsilon$. Our solution improves the previous best result of Greve et al. (Cell Probe Lower Bounds and Approximations for Range Mode, ICALP'10) by reducing the space cost by a factor of $\lg n$ while achieving the same query time. We also design an $O(n)$-word dynamic data structure that answers queries in $O(\lg n /\lg\lg n)$ time and supports insertions and deletions in $O(\lg n)$ time, for any constant $\epsilon \in (0,1)$. This is the first result on dynamic approximate range mode; it can also be used to obtain the first static data structure for approximate 3-sided range mode queries in two dimensions.

We also consider approximate range selection. For any $\alpha \in (0,1/2)$, an $\alpha$-approximate range selection query asks for the position of an element whose rank in the query range is in $[k - \alpha s, k + \alpha s]$, where $k$ is a rank given by the query and $s$ is the size of the query range. When $\alpha$ is a constant, we design an $O(n)$-bit encoding data structure that can answer queries in constant time and prove this space cost is asymptotically optimal. The previous best result by Krizanc et al. (Range Mode and Range Median Queries on Lists and Trees, Nordic Journal of Computing, 2005) uses $O(n\lg n)$ bits, or $O(n)$ words, to achieve constant approximation for range median only. Thus we not only improve the space cost, but also provide support for any arbitrary $k$ given at query time.
  
\end{abstract}

\section{Introduction}
 
The mode and median of a data set are important statistics, widely used across many disciplines. 
Thus, they are frequently computed in applications for data mining, information retrieval and data analytics. 
The range mode and median problems further aim at speeding up the computation of the mode and median in an arbitrary subrange of the given sequence of elements, 
and thus have been studied extensively~\cite{kms2005,bkpt2005,p2008,pg2009,gs2009,gjlt2010,jl2011,bgjs2011,hmn2011,ChanDLMW14,gn2015,cw2016,ginrr2017,zhms2018}.
In these problems, we preprocess a sequence of elements $c_1, c_2, \ldots, c_n$ to answer queries. 
Given two indices $a$ and $b$ with $1\leq a \leq b \leq n$, a {\em range mode query} asks for a position of the most frequent element in $c_{a..b}$ ($c_{a..b}$ denotes $c_a, \ldots, c_b$), 
while a {\em range median query} asks for the position of the median element in $c_{a..b}$. 
A generalization of range median is the {\em range selection query}, which asks for the position of the $k^\mathrm{th}$ smallest element in $c_{a..b}$ for any given $k$.
Thus a range selection query becomes range median if $k = \lceil (b-a+1)/2 \rceil$. 

Due to the massive amounts of electronic data available, linear space data structures are often preferred by modern applications.
The following are the best solutions to these query problems that use $O(n)$ words of space. 
In static settings, Chan et al.~\cite{ChanDLMW14} showed how to answer a range mode query in $O(\sqrt{n/\lg n})$ time. 
By  proving a conditional lower bound, they also gave strong evidence that, if linear space is required, this query time cannot be improved significantly using purely combinatorial methods with current knowledge. 
When updates to elements are allowed, El-Zein et al.~\cite{zhms2018} showed how to support both range mode queries and updates in $O(n^{2/3})$ time.
For range selection, the solution of Chan and Wilkinson~\cite{cw2016} answers queries in  $O(\lg k / \lg\lg n +1)$ time, matching the lower bound of J{\o}rgensen and Larsen~\cite{jl2011} under the cell probe model.
He et al.~\cite{hmn2011} showed how to support range selection in $O((\lg n /\lg\lg n)^2)$ worst-case time and updates in $O((\lg n /\lg\lg n)^2)$ amortized time. 

The query times for range mode in both linear space data structure solutions and conditional lower bounds are much larger than that for many other query problems, including range median.
To provide faster support for queries, researchers have studied {\em approximate range mode}~\cite{gjlt2010}.
To define this query, let $F_x(c_{a..b})$ denote the frequency of an element $x$ in $c_{a..b}$ and $F(c_{a..b})$ denote the frequency of the mode of $c_{a..b}$ ($F(c_{a..b})=\text{max}_xF_x(c_{a..b})$).
Then a $(1+\eps)$-approximate range mode query asks for the position of an element $x$ in $c_{a..b}$  such that $(1 + \eps) \cdot F_x(c_{a..b}) \geq F(c_{a..b})$ for some positive $\eps$. 
This element is called a $(1+\eps)$-approximate mode of $c_{a..b}$. 
Previously, the best result on this problem is that of Greve et al.~\cite{gjlt2010}, which uses $O(n/\eps)$ words of space to support queries in $O(\lg(1/\eps))$ time, for any $\eps \in (0,1)$.

Approximate range median can be defined similarly.
We say that the $i$th smallest element in the query range $c_{a..b}$ has rank $i$. 
Then, for an approximation ratio $\alpha \in (0,1/2)$, an $\alpha$-approximate range median query asks for the position of an element $x$ whose rank in $c_{a..b}$ is between $\lceil s / 2 \rceil - \alpha s$ and $\lceil s / 2 \rceil + \alpha s$, where $s=b-a+1$.
Bose et al.~\cite{bkpt2005} studied this problem, for which they proposed a data structure occupying $O(n/\alpha)$ words of space that answers queries in constant time.
An $\alpha$-approximate range selection query can also be defined, which, for any given $k$, asks for the position of an element $x$ whose rank in $c_{a..b}$ is between $k - \alpha s$ and $k + \alpha s$. 
However, this problem has not been formally studied previously.

To further improve the space efficiency of data structures, researchers have recently studied various query problems in the encoding model~\cite{fh2011,ginrr2017}.
Under this model, a data structure is not allowed to store or assume access to the original data set.
Instead, it should occupy as little space as possible, while providing support for queries.
For example, in this model, Fischer and Heun~\cite{fh2011} studied the range minimum query problem, which asks for the position of the smallest element in $c_{a..b}$. 
They proposed a data structure occupying only $2n + o(n)$ bits with constant query time. 
The range selection problem has also been considered in this model:
Grossi et al.~\cite{ginrr2017} proposed an encoding data structure occupying $O(n\lg \kappa)$ bits for any fixed positive integer $\kappa$, using which a range selection query can be answered in $O(\lg k / \lg\lg n +1)$ time for any $k$ given in the query with $1\le k \le \kappa$.

Naturally, encoding data structures are only relevant when their space occupancy is asymptotically less than the input data, at least for certain choices of parameters.
The space costs of previous results on approximate range mode or median, however, match the size of the input sequence asymptotically when $\eps$ or $\alpha$ is a constant and become superlinear when $\eps$ or $\alpha$ is in $o(1)$. 
Thus, we study the problem of designing encoding data structures of approximate range mode, median and selection queries, to improve the space efficiency of previous solutions.
Furthermore, previously no research has been done on dynamic approximate range mode, while the dynamic exact data structures for range mode require polynomial query and update times. 
Therefore, we also study approximate range mode queries under dynamic settings, to provide substantially faster support for queries and updates. 


\begin{table}
\label{table:rangemode}
\begin{center}
\makebox[0cm]{
\begin{tabular}{|c|p{1.18in}|p{1.18in}|p{1.02in}|c|}
\hline
Query Type & Query Time & Update Time & Space in Bits & Source\\
\hline
\multirow{6}{*}{Exact} & $O(n^\delta \log n)$ & - & $O(n^{2-2\delta}\lg n)$ & \cite{kms2005}\\
\hhline{~----}
& $O(1)$ & - & $O(n^2 \log \log n/\lg n)$ & \cite{pg2009}\\
\hhline{~----}
& $O(\sqrt{n/\log n})$ & - & $O(n\lg n)$ & \cite{ChanDLMW14}\\
\hhline{~----}
& $O(n^{3/4} \log n / \log \log n)$ & $O(n^{3/4}  \log \log n)$ & $O(n\lg n)$ & \cite{ChanDLMW14}\\
\hhline{~----}
& $O(n^{2/3} \log n / \log \log n)$ & $O(n^{2/3} \log n / \log \log n)$ & $O(n^{4/3}\lg n)$ & \cite{ChanDLMW14}\\
\hhline{~----}
& $O(n^{2/3})$ & $O(n^{2/3})$ & $O(n\lg n)$ & \cite{zhms2018}\\
\hline
& $O(\lg\lg n + \lg(1/\eps))$ & - & $O(n\lg n/\eps)$ & \cite{bkpt2005}\\
\hhline{~----}
$(1+\eps)-$ & $O(\lg(1/\eps))$ & - & $O(n\lg n/\eps)$ & \cite{gjlt2010}\\
\hhline{~----}
$\text{Approximation}$ & $O(\lg(1/\eps))$ & - & $O(n /\eps)$ & new \\
\hhline{~----}
& $O(\lg m / \lg\lg m)$ & $O(\lg n/\eps^2)$ & $O(m\lg m)$ & new \\
\hline
\end{tabular}
}
\end{center}
\caption{Static and Dynamic Range Mode Query History. In this table, $\delta$ is an arbitrary constant in $(0, 1/2)$ and $m = \min(n\lg n/\eps,n/\eps^2)$.}
\label{table:rangemode}
\end{table}

\subparagraph*{Our Results.} For $(1+\eps)$-approximate range mode, where $0 < \eps < 1$, we design an encoding data structure using $O(n/\eps)$ bits that can answer a query in  $O(\lg(1/\eps))$ time. 
This is an improvement upon previous best result of Greve et al.~\cite{gjlt2010}, since we match their query time while saving the space cost by a factor of $\lg n$; we assume a word RAM model in which each word has $\Theta(\lg n)$ bits.
We also prove a lower bound to show that any data structure supporting $(1+\eps)$-approximate range mode must use $\Omega(n/(1+\eps))$ bits for any positive $\eps$.
This means that our space cost is asymptotically optimal for constant $\eps$. 
When $\eps$ is not necessarily a constant, as long as $\eps = o(1/\lg n)$, our data structure uses $o(n\lg n)$ bits, i.e., $o(n)$ words, which is asymptotically less than the space needed to encode the original sequence itself. 

For $\alpha$-approximate range selection, where $0 < \alpha < 1/2$, we design encoding data structures for two variants of this problem.
If $k$ is fixed and given in advance, either as a constant or as a function of the size, $s$, of the query range satisfying certain reasonable constraints (e.g., $k = \lceil s/2 \rceil$ for range median), 
we have a solution occupying $O(n/\alpha^2)$ bits that can answer a query in constant time.
If $k$ is not known beforehand and different values of $k$ could be given with each query, we have another encoding structure in $O(n/\alpha^3)$ bits with constant query time.
Our query time matches that of the previous best data structure of Bose et al.~\cite{bkpt2005} which supports range median only, while we decrease the space cost by a factor of $\lg n$ when $\alpha$ is a constant.
As we also show that any approximate range selection data structure must use at least $\Omega(n)$ bits, our data structures are asymptotically optimal for constant $\alpha$.

In dynamic settings, for any $\eps \in (0,1)$, we present an $O(m\lg m)$-bit structure where $m = \min(n\lg n/\eps,n/\eps^2)$.
It supports $(1+\eps)$-approximate range mode in $O(\lg m /\lg\lg m)$ time and insertions/deletions in $O(\lg n /\eps^2)$ time.
When $\eps$ is an arbitrary constant in $(0,1)$, this data structure uses $O(n)$ words, answers queries in  $O(\lg n /\lg \lg n)$ time, and supports updates in $O(\lg n)$ time. 
As the best result on dynamic exact range mode~\cite{zhms2018} requires $O(n^{2/3})$ time for both queries and updates, this approximate solution is much faster for constant $\eps$.
It is also the first result on dynamic approximate range mode. 
Finally, we apply the technique to solve static $(1+\eps)$-approximate three-sided range mode in two dimensions, achieving $O(\lg m)$ time query and occupying $O(m \lg m)$ words of space, where again $m = \min(n\lg n/\eps,n/\eps^2)$. This is another new approximate query problem. 

Tables \ref{table:rangemode} and \ref{table:rangeselect} compare our results to previous work to be surveyed in Section~\ref{previousw}.


\begin{table}
\label{table:rangeselect}
\begin{center}
\makebox[0cm]{
\begin{tabular}{|c|c|c|p{1.15in}|c|}
\hline
Query Type & Query Time & Update Time & Space in Bits & Source\\
\hline
\multirow{6}{*}{Exact} & $O(1)$ & - & $O((n\lg\lg n)^2 / \lg n)$ & \cite{pg2009}\\
\hhline{~----}
& $O(\lg n /\lg\lg n)$ & - & $O(n \log n)$ & \cite{bgjs2011}\\
\hhline{~----}
& $O(\lg k / \lg\lg n +1)$ & - & $O(n\lg n)$ & \cite{cw2016}\\
\hhline{~----}
& $O(\lg^2 n)$ & $O(\lg^2 n)$ & $O(n\lg^2 n)$ & \cite{gs2009}\\
\hhline{~----}
& $O((\lg n /\lg\lg n)^2)$ & $O((\lg n /\lg\lg n)^2)$ & $O(n \lg^2 n / \lg\lg n)$ & \cite{bgjs2011}\\
\hhline{~----}
& $O((\lg n /\lg\lg n)^2)$ & $O((\lg n /\lg\lg n)^2)$ & $O(n\lg n)$ & \cite{hmn2011}\\
\hline
 $\alpha-\text{Approximation}$ & $O(1)$ & - & $O(n\lg n/\alpha)$ & \cite{bkpt2005}\\
\hhline{~----}
 (with fixed $k$) & $O(1)$ & - & $O(n/\alpha^2)$ & new \\
\hline
\multirow{1}{*}{$\alpha-\text{Approximation}$} & $O(1)$ & - & $O( n/\alpha^3)$ & new \\
\hline
\end{tabular}
}
\end{center}
\caption{Static and Dynamic Range Median and Selection Query History.}
\label{table:rangeselect}
\end{table}

\section{Previous Work}
\label{previousw}


\subparagraph*{Range Mode.} Krizanc et al. \cite{kms2005} first studied the static range mode problem and showed that, for any $\delta \in (0, 1/2)$, there is an $O(n^{2-2\delta})$-word solution that answers queries in  $O(n^\delta \log n)$ time.
Setting $\delta = 1/2$ yields an $O(n)$-word data structure supporting range mode in $O(\sqrt{n} \log n)$ time.
They also presented a data structure using $O(n^2 \log \log n / \log n)$ words, or $O(n^2 \log \log n)$ bits, to support queries in constant time.
Chan et al. \cite{ChanDLMW14} further provided a better linear word solution with $O(\sqrt{n/\log n})$ query time.
They also proved a conditional lower bound to show that, with current knowledge, either the query time must be polynomial, or the construction time must be polynomially larger than $n$. 
Later, Greve et al.~\cite{gjlt2010} gave an (unconditional) lower bound in the cell probe model, showing that any structure using $S$ memory cells of $w$-bit words requires $\Omega(\frac{\log n}{\log (Sw/n)})$ time to answer a range mode query.
On the other end of the spectrum, there has been work~\cite{p2008,pg2009} on improving the constant-time query structure of Krizanc et al., and the best solution uses $O(n^2\lg\lg n / \lg^2n)$ words, or $O(n^2\lg\lg n / \lg n)$ bits~\cite{pg2009}.

In dynamic settings, Chan et al. \cite{ChanDLMW14} provided a tradeoff among space cost, query time and update time. This tradeoff implies two important results: using linear space in words, range mode can be supported in $O(n^{3/4} \log n / \log \log n)$ worst-case time while updates can be performed in $O(n^{3/4} \log \log n)$ amortized expected time. Alternatively, they can use $O(n^{4/3})$ words to improve the query and update efficiency to $O(n^{2/3} \log n / \log \log n)$ worst-case time and amortized expected time, respectively.
They also proved a conditional lower bound to show that, with current knowledge, either queries or updates must require polynomial time.
Very recently, El-Zein et al.~\cite{zhms2018} further improved these solutions by designing an $O(n)$-word structure supporting both queries and updates in $O(n^{2/3})$ time.

Bose et al.~\cite{bkpt2005} were the first to study approximate range mode.
They showed how to provide constant-time support for $4$-approximate mode, $3$-approximate mode and $2$-approximate mode using data structures occupying $O(n)$, $O(n\lg\lg n)$ and $O(n\lg n)$ words, respectively. 
For $(1+\eps)$-approximation, they designed an $O(n/\eps)$-word solution that can answer a query in $O(\lg\lg_{1+\eps} n) = O(\lg\lg n + \lg(1/\eps))$ time. 
Greve et al.~\cite{gjlt2010} further improved these results by using $O(n/\eps)$ words of space to support queries in $O(\lg(1/\eps))$ time. 



\subparagraph*{Range Median and Selection.} The study of range median also has a rich history. 
It was also Krizanc et al. \cite{kms2005} who initially proposed this problem.
There have been several solutions with near-quadratic space and constant query time~\cite{kms2005,p2008,pg2009}, the best of which uses $O((n\lg\lg n / \lg n)^2)$ words~\cite{pg2009}. 
For linear-space solutions, following a series of earlier work~\cite{kms2005,gpt2009,gs2009,bj2009}, Brodal et al.~\cite{bgjs2011} first achieved an $O(n)$-word solution that answers range median and selection queries in $O(\lg n /\lg\lg n)$ time.
J{\o}rgensen and Larsen~\cite{jl2011} further improved the query time of range selection to $O(\lg\lg n + \lg k / \lg\lg n)$, where $k$ is the specified query rank.
They also proved that, under the cell probe model, $\Omega(\lg k / \lg\lg n +1)$ time is necessary for any range selection data structure using $O(n\lg^{O(1)}n)$ space.
Chan and Wilkinson~\cite{cw2016} were then the first who designed a linear word solution with $O(\lg k / \lg\lg n +1)$ optimal query time for range selection.
More recently, Grossi et al.~\cite{ginrr2017} proposed an encoding data structure occupying $O(n\lg \kappa)$ bits for any fixed positive integer $\kappa$, using which a range selection query can be answered in $O(\lg k / \lg\lg n +1)$ time for any $k$ given in the query with $1\le k \le \kappa$.
Gawrychowski and Nicholson~\cite{gn2015} presented a space-optimal encoding of range selection which uses even less space, and proved its space cost is optimal within an $o(n)$ additive term in bits, though no support for queries is provided. All of the above results for range selection assume the selection rank $k$ is specified at query time.

In the dynamic case, Gfeller and Sanders~\cite{gs2009} proposed a data structure that uses $O(n\lg n)$ words of space to support range median in $O(\lg^2 n)$ time and insertions and deletions in $O(\lg^2 n)$ amortized time.
The structure of Brodal et al.~\cite{bgjs2011} occupies $O(n \lg n / \lg\lg n)$ words of space, answers queries in $O((\lg n /\lg\lg n)^2)$ worst-case time and supports insertions and deletions in $O((\lg n /\lg\lg n)^2)$ amortized time.
Later He et al.~\cite{hmn2011} improved the space cost to $O(n)$ words while providing the same support for queries and updates. 
The work of Bose et al.~\cite{bkpt2005} is the only work on $\alpha$-approximate range median, for which they proposed a data structure occupying $O(n/\alpha)$ words of space that answers queries in constant time.


\section{Approximate Range Mode}
Before we proceed, we give a few preliminaries. We will at times refer to elements (of $c_{1..n}$ or otherwise) as \emph{colors}. This is because their data type has no significance in frequency applications and thus the term color standardizes the data type.
Furthermore, at times we create indexing such as a value $r_i$ for when the mode in some range $c_{s_i..r_i}$ exceeds a given threshold. It is possible the mode never exceeds such a threshold. To avoid dealing with such corner cases in the rest of this exposition, we make the assumption that our list of elements $c_{1..n}$ is padded at the beginning and end with a sufficient number of one arbitrary color.

We allow non-constant $\eps$. However, in our upper bounds, we make the restriction $\epsilon \leq 1$, to allow simplification in the runtime and space analyses.

\begin{theorem}
\label{theor:mode2}
Any one-dimensional $(1+\eps)$-approximate range mode data structure requires $\Omega(n/(1+\eps))$ bits.
\end{theorem}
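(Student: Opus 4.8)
The plan is to prove the bound by an encoding (counting) argument: I will exhibit a family of $2^{L}$ input sequences with $L=\Omega(n/(1+\eps))$ on which the data structure must reach $2^{L}$ distinct states, forcing $\Omega(L)$ bits of storage. Set $f=\lceil 1+\eps\rceil+1$ (so that $f>1+\eps$ and $f=\Theta(1+\eps)$), partition the array into $L=\lfloor n/(2f)\rfloor=\Omega(n/(1+\eps))$ consecutive blocks of length $2f$, and for each bit string $B\in\{0,1\}^{L}$ build a sequence $c^{(B)}$ as follows. Block $i$ uses one dedicated ``heavy'' color $a_i$ together with $f$ distinct ``filler'' colors, all of these colors chosen to be distinct across all blocks. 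If $B_i=1$, block $i$ consists of $f$ copies of $a_i$ followed by the $f$ fillers (one occurrence each); if $B_i=0$, it consists of the $f$ fillers followed by $f$ copies of $a_i$. Any leftover positions are padded with a further distinct color, so that every $c^{(B)}$ has the same length $n$.

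Next I will analyze the query that asks for the approximate mode of the range equal to block $i$. Because the colors of block $i$ appear nowhere else, the frequencies inside this range are exactly $f$ for $a_i$ and $1$ for each filler, so the mode frequency is $f$. A position is a valid answer only if the color $x$ it holds satisfies $(1+\eps)F_x\ge f$, i.e.\ $F_x\ge f/(1+\eps)>1$; since only $a_i$ has frequency larger than $1$, every valid answer must be a position occupied by $a_i$. By construction these positions form exactly the first half of the block when $B_i=1$ and exactly the second half when $B_i=0$, and the two halves are disjoint. Hence the set of admissible answer positions for block $i$ is contained in the first half iff $B_i=1$ and in the second half iff $B_i=0$.

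Finally I will turn this into the counting bound. Suppose two strings $B\neq B'$ produced the same stored encoding; pick a block $i$ with $B_i\neq B'_i$. Running the query algorithm on this common encoding with the range of block $i$ yields one fixed position $p$, which must simultaneously be a valid $(1+\eps)$-approximate mode for $c^{(B)}$ and for $c^{(B')}$. But the admissible positions lie in the first half for one of the two inputs and in the disjoint second half for the other, so no single $p$ can serve both---a contradiction. Therefore the map $B\mapsto(\text{encoding of }c^{(B)})$ is injective, the structure has at least $2^{L}$ states, and it must use at least $\lg 2^{L}=L=\Omega(n/(1+\eps))$ bits. The one point that needs care, and which drives the whole design, is defeating an adversarial data structure that may return any valid position: this is exactly why the heavy color is placed in disjoint halves in the two cases and why the fillers are given frequency $1$ (guaranteeing $a_i$ is the only color meeting the $f/(1+\eps)$ threshold), so that the returned position alone pins down $B_i$.
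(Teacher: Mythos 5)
Your proposal is correct and follows essentially the same route as the paper's proof: blocks of length $2\lceil 1+\eps\rceil+2$ containing one heavy color of frequency $\lceil 1+\eps\rceil+1$ placed in either the first or second half alongside frequency-one fillers, so that any valid $(1+\eps)$-approximate answer must land in the half determined by the block's bit, yielding $2^{\Omega(n/(1+\eps))}$ distinguishable inputs. The only cosmetic difference is that you use fresh colors in each block whereas the paper reuses the same $k+1$ colors throughout; since each distinguishing query is confined to a single block, both choices work, and your explicit verification of the threshold $F_x\ge f/(1+\eps)>1$ is a welcome detail the paper leaves implicit.
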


We delay the proof of Theorem \ref{theor:mode2} to Appendix \ref{appendix}. We proceed with our new upper bound.

Our data structure consists of two parts. The first part answers {\it low} frequency queries $c_{a..b}$ with $F(c_{a..b}) \leq \lceil 1 / \eps \rceil$, and is exact. The second part answers {\it high} frequency queries $c_{a..b}$ with $F(c_{a..b}) > \lceil 1 / \eps \rceil$, and makes use of the approximation factor.


\subparagraph*{Low Frequencies: $O(n/\eps)$-Bits $O(\lg(1/\eps))$ Query Time. } 
Similar to the data structure of Greve et al.~\cite{gjlt2010},
for $k=0,\ldots, \lceil 1 / \eps \rceil$ let $Q_k$ be an increasing sequence of size $n$ such that $Q_k[i]$ is the largest integer $j \geq i$ satisfying $F(c_{i..j}) = k$.   
Since $Q_k$ is an increasing sequence whose largest element is $n$, 
we store it in $2n + O(n / \lg^2 n)$ bits \cite{Patrascu08} while still accessing its $i^{\mathrm{th}}$ element in constant time\footnote{
We store $Q_k[1]$ and $(Q_k[i]-Q_k[i-1])$ in unary with a $0$ separator between each two consecutive values in a $2n$-bit vector $\psi$ with rank and select structures. 
To access $Q_k[i]$ we count the number of $1$s before the $i^{th}$ $0$ in $\psi$.
}.
The total space used is $O(n/\eps)$ bits.
Given a query range $c_{a..b}$, $F(c_{a..b}) > k$ iff $b > Q_k[a]$.  
Thus, using binary search, 
we can determine if $F(c_{a..b}) < 1/\eps$ and $K=F(c_{a..b})$ in that case.
If $F(c_{a..b}) < 1/\eps$ we return index $Q_{(K-1)}[a] + 1$; otherwise we query the high frequency structure. 
The total time is $O(\lg(1/\eps))$.

\label{sec:encoding}

\subparagraph*{High Frequencies: $O(n/\eps)$-Bits $O(\lg \lg n + \lg(1/\eps))$ Query Time. }
We first present an $O(n/\eps)$-bit structure that answers high frequency $(1+\eps)$-approximate range mode queries in $O(\lg \lg n + \lg(1/\eps))$ time. We start by developing a tool to binary search the frequency of the mode, with the goal of locating a $(1+\eps)$-approximate mode.

\begin{lemma}
\label{lemma:encoding}
There exists a data structure using $O(k\cdot \eps\cdot n/(1+\eps)^k + n/\lg^2 n)$ bits that can find in constant time, for any query range $c_{a..b}$, one of the following that holds:
\begin{enumerate}
  \item $F(c_{a..b}) < (1+\eps)^k/\eps$,
  \item $F(c_{a..b}) > (1+\eps)^k/\eps$, or
  \item $((1+\eps)^{k-1/2})/\eps < F(c_{a..b}) < ((1+\eps)^{k+1/2})/\eps$.
\end{enumerate}
In case 2, we find an element with frequency greater than $(1+\eps)^k/\eps$ in range $c_{a..b}$. In case 3, 
we find an element with frequency greater than $((1+\eps)^{k-1/2})/\eps$ in range $c_{a..b}$.
\end{lemma}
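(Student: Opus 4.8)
The plan is to build, for this single level $k$, a one-dimensional sampling structure for the threshold $t=(1+\eps)^k/\eps$ that compares $F(c_{a..b})$ against $t$ up to the multiplicative slack $(1+\eps)^{1/2}$. Write $t^-=(1+\eps)^{k-1/2}/\eps$ and $t^+=(1+\eps)^{k+1/2}/\eps$, so that $t$ is the geometric mean of $t^-$ and $t^+$ and $\eps t=(1+\eps)^k$. For a left endpoint $x$ let $R(x)$ be the smallest index with $F(c_{x..R(x)})\ge t$; then $R$ is non-decreasing in $x$, and $F(c_{a..b})\ge t$ holds exactly when $b\ge R(a)$. I would not store $R$ everywhere. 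Instead I sample left endpoints $\sigma_1<\sigma_2<\cdots<\sigma_N$ at a regular spacing $\Delta=\lceil(1-(1+\eps)^{-1/2})\,t\rceil=\Theta(\eps t)$, so that $N=\Theta(n/(\eps t))=\Theta(n/(1+\eps)^k)$, and I store only the crossing positions $R(\sigma_1)\le\cdots\le R(\sigma_N)$.

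Since this is an increasing sequence of $N$ values over the universe $[1,n]$, I would encode it with an Elias--Fano / increasing-sequence representation supporting constant-time \emph{select}. The universe-to-size ratio is $n/N=\Theta(\eps t)=\Theta((1+\eps)^k)$, so each value costs $O(\lg(n/N))=O(k\lg(1+\eps))=O(k\eps)$ bits (using $\lg(1+\eps)=\Theta(\eps)$ for $\eps\le1$); the total is $O(N\cdot k\eps)=O(k\eps\, n/(1+\eps)^k)$ bits, while the succinct rank/select machinery contributes the lower-order $O(n/\lg^2 n)$ redundancy. To answer a query I locate the block containing $a$ by a single division to get its index $i$, recover $R(\sigma_i)$ and $R(\sigma_{i+1})$ by two \emph{select} calls, and compare both with $b$: if $b<R(\sigma_i)$ then $b<R(a)$ and we report case~1; if $b\ge R(\sigma_{i+1})$ then $b\ge R(a)$ and we report case~2; and if $R(\sigma_i)\le b<R(\sigma_{i+1})$ we report case~3. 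Every step is $O(1)$.

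The crux is the slack analysis that certifies cases~2 and~3, together with producing an explicit witness position inside $[a,b]$. For case~2 the colour that reaches frequency $t$ at $R(\sigma_{i+1})$ occupies at least $t$ positions of $[\sigma_{i+1},R(\sigma_{i+1})]\subseteq[a,b]$, and $R(\sigma_{i+1})$ is itself one of its occurrences, so the stored value doubles as the witness for free. For case~3 I would argue that deleting the prefix $c_{\sigma_i..a-1}$, of length at most $\Delta$, changes any colour's count by at most $\Delta$; combined with $b\ge R(\sigma_i)$, $b<R(\sigma_{i+1})$, and the choice $\Delta=\lceil(1-(1+\eps)^{-1/2})t\rceil$, a short calculation (it reduces to $(u-1)^2\ge0$ with $u=(1+\eps)^{1/2}$) gives $t^-<F(c_{a..b})<t^+$, so case~3 is valid. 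The genuine obstacle is the witness in case~3: the crossing position $R(\sigma_i)$ is an occurrence of a colour of frequency $\ge t$ in $[\sigma_i,R(\sigma_i)]$ and serves directly when $R(\sigma_i)\ge a$, but the boundary sub-case $R(\sigma_i)<a$ needs more care, where I would either store one additional compactly encoded offset per anchor or re-derive the occurrence from $\sigma_{i+1}$; checking that the returned position always lies in $[a,b]$ while retaining frequency above $t^-$ is exactly where the argument must be made watertight. Finally, summing the per-level cost over all $k$ telescopes geometrically to the $O(n/\eps)$ bits claimed for the high-frequency structure.
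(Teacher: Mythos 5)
Your construction is recognizably a relative of the paper's, but it is genuinely leaner. The paper builds \emph{two} interleaved sample grids per level $k$: one with spacing $(1-u^{-1})t$ and threshold $t=(1+\eps)^k/\eps$ (exactly your grid, with $u=\sqrt{1+\eps}$), and a second with spacing $(u-1)t$ and the higher threshold $ut=t^+$; the trichotomy there comes from comparing $b$ against one crossing position from each grid. You keep only the first grid and instead compare $b$ against two \emph{consecutive} anchors $R(\sigma_i)\le R(\sigma_{i+1})$, which halves the stored data and makes case~2 cleaner (the whole anchor interval $[\sigma_{i+1},R(\sigma_{i+1})]$ sits inside $[a,b]$, so the witness needs no prefix-loss argument). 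Your flagged worry about the case-3 witness is in fact vacuous: $R(\sigma_i)-\sigma_i+1\ge\lceil t\rceil\ge\Delta$, so $R(\sigma_i)\ge\sigma_i+\Delta-1\ge a$ always, and the returned position automatically lies in $[a,b]$ with frequency above $t^-$.

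The one step that does not close as claimed is the case-3 \emph{upper} bound. Every bound in the paper's proof either shrinks the range or removes a prefix of length at most $\lceil\text{spacing}\rceil-1<\text{spacing}$, so the ceilings cost nothing. Your upper bound instead \emph{adds} the prefix $c_{a..\sigma_{i+1}-1}$, whose length can equal the full $\Delta=\lceil(1-u^{-1})t\rceil\le(1-u^{-1})t+1$, giving only $F(c_{a..b})<t+(1-u^{-1})t+1=(2-u^{-1})t+1$. Matching this against $t^+=ut$ requires $(u-1)^2t/u\ge1$, not merely $(u-1)^2\ge0$; since $(u-1)^2/u=\Theta(\eps^2)$ and $t=(1+\eps)^k/\eps$, the requirement fails for all $k$ with $(1+\eps)^k=O(1/\eps)$, i.e.\ for the first $\Theta(\lg(1/\eps)/\eps)$ levels. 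The damage is mild---case 3 then certifies $F<t^++1$ instead of $F<t^+$, which still yields a $(1+O(\eps))$-approximation after rescaling $\eps$---but it falls short of the lemma as stated. It can be repaired either by reinstating the paper's second grid for the upper bound, or by choosing the spacing so the added prefix is strictly shorter than $(1-u^{-1})t$ and handling the degenerate small-$k$ levels via the exact low-frequency structure. Your space accounting otherwise mirrors the paper's (and inherits the same benign looseness when $k\lg(1+\eps)<1$, where the increasing-sequence encoding costs $\Theta(N)$ rather than $O(Nk\eps)$ bits per level; the sum over all $k$ is still $O(n/\eps)$).
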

When this structure is present for all $k$ in range $0, \ldots, \lfloor\lg_{1+\eps} \eps n\rfloor$, 
the above trichotomy is sufficient to binary search for an approximate mode of frequency at least $1/\eps$. 
If we ever land in case 3, 
the encoding gives an approximate mode, 
and otherwise, 
we find the $k$ satisfying $(1+\eps)^k/\eps < F(c_{a..b}) < (1+\eps)^{k+1}/\eps$, 
which represents case 2 for value $k$ and case 1 for value $k+1$. 
Since case 2 provides an element with frequency greater than $(1+\eps)^k/\eps$, this element is an approximate mode.
\begin{proof}
Let $1+\Delta = \sqrt{1+\eps}$ and $f_{j}=(\Delta/\eps)\cdot (1+\Delta)^{j}$.
For each integer $i$ in $[0, n/\lceil f_{2k-1}\rceil]$ let $s_i=i\cdot \lceil f_{2k-1} \rceil + 1$ and denote by $r_i$ the smallest value such that $F(c_{s_i..r_i})\geq (1+\Delta)^{2k}/\eps$.
Notice that $c_{r_i}$ is the unique mode of $c_{s_i..r_i}$. 
Similarly, for each integer $i$ in $[0, n/\lceil f_{2k}\rceil]$, 
let $s'_i=i\cdot \lceil f_{2k} \rceil + 1$ and denote by $r'_i$ the smallest value such that $F(c_{s'_i..r'_i})\geq(1+\Delta)^{2k+1}/\eps$. 

Given a query range $c_{a..b}$, we find the biggest indices $s_i,s'_j$ preceding or equal to $a$. We proceed as follows.
\begin{enumerate}
\item If $b < r_i$, then $F(c_{a..b}) \leq F(c_{s_i..r_i-1}) < ((1+\Delta)^{2k}/\eps) = ((1+\eps)^k/\eps)$.
\item If $b \geq r'_j$, then $F_{r'_j}(c_{a..b}) > F_{r'_j}(c_{s'_j..r'_j}) - f_{2k}$, since there are at most $\lceil f_{2k} \rceil - 1 < f_{2k}$ elements between $s'_j$ and $a$. Then: 
\begin{align*}
F_{r'_j}(c_{a..b}) &> F_{r'_j}(c_{s'_j..r'_j}) - f_{2k}
                  \geq ((1+\Delta)^{2k+1}/\eps) - (\Delta/\eps)\cdot(1+\Delta)^{2k} 
                  = (1+\Delta)^{2k}/\eps \\
				  &= (1+\eps)^k/\eps.
\end{align*}
\item Suppose $b \geq r_i$ and $b < r'_j$. Since there are at most $\lceil f_{2k-1} \rceil -1 < f_{2k-1}$ elements between $s_i$ and $a$ and since $b \geq r_i$, we have that
\begin{align*}
F_{r_i}(c_{a..b}) &> F_{r_i}(c_{s_i..r_i}) - f_{2k-1}
\geq ((1+\Delta)^{2k}/\eps) - (\Delta/\eps) \cdot (1+\Delta)^{2k-1}
=((1+\Delta)^{2k-1})/\eps\\
&=((1+\eps)^{k-1/2})/\eps.
\end{align*}
Finally, since $b < r'_j$, then $F(c_{a..b}) \leq F(c_{s'_j..r'_j-1}) < ((1+\Delta)^{2k+1})/\eps = ((1+\eps)^{k+1/2})/\eps$.
\end{enumerate}
To store the values $\{r_i\}$, we construct a bit vector of length $O(n)$ as follows. In the bit vector, there are $n$ $0$s. For each $r_i$, we insert a $1$ bit after the $r_i$th $0$ bit in the bit vector.
Thus $r_i$ is equal to the number of $0$s before the $i$th $1$ bit in the bit vector.
A second bit vector of length $O(n)$ is used to encode the values $\{r_i'\}$ in a similar way. 
We then represent these two bit vectors in the succinct data structure of Patrascu \cite{Patrascu08}. This data structure provides constant time rank and select, which allow us to locate $r_i$ and $r'_j$, and thus determine whether case 1, 2, or 3 applies, in constant time.

For a bit vector of size $n$ with $m$ ones, the space cost can be made $O(m \lg(n/m)  + n/\lg^2 n)$ bits \cite{Patrascu08}. For vector $r$, $m\lg(n/m) = O((n/f_{2k-1}) \lg f_{2k-1})$, and for vector $r'$, $m\lg(n/m) = O((n/f_{2k}) \lg f_{2k})$. The cost is dominated by vector $r$. Let us first consider the $O(m\lg(n/m))$ term. We have
\begin{equation}
\label{eqbound}
n/f_{2k-1} \lg f_{2k-1} = \frac{\eps n}{\Delta(1+\Delta)^{2k-1}} \lg ((\Delta/\eps) \cdot (1+\Delta)^{2k-1}).
\end{equation}
Rationalizing the denominator, 
we can show $\frac{1}{\Delta} = \frac{1}{\sqrt{1+\eps}-1} = \frac{1+\sqrt{1+\eps}}{\eps}$ and so $\frac{1}{\Delta} = \Theta(\frac{1}{\eps})$ and $\Delta/\eps = O(1)$.
Thus, with $\epsilon \leq 1$, we can bound \eqref{eqbound} with $O\left(\frac{(k-1/2)n}{(1+\eps)^{k-1/2}} \lg (1+\eps)\right)$.
Finally, since we restrict $\eps \leq 1$, we can do a Taylor series expansion to give $\lg(1+\eps) = O(\eps)$. Thus our final space bound is $O((n/f_{2k-1}) \lg f_{2k-1} + n/\lg^2 n) = O(k\cdot \eps\cdot n/(1+\eps)^k + n/\lg^2 n)$.
\end{proof}

To make the above lemma useful, we must apply it to all $k$ in range $0, \ldots, \lfloor \lg_{1+\eps} \eps n\rfloor$. We first analyze the total space cost of all the $O(k\cdot \eps\cdot n/(1+\eps)^k)$ terms. 
Summing up these terms, we have $O\left(\sum_{k=1}^{\lfloor\lg_{1+\eps} n\rfloor} (k\cdot \eps \cdot n / (1+\eps)^k) \right) = O\left(n\cdot \eps \sum_{k=1}^\infty (k / (1+\eps)^k)\right) = O(n/\eps)$ bits.
The other term comes out to $O(\lg_{1+\eps}(\eps\cdot n) \cdot n/\lg^2 n) \subseteq O\left(\frac{n}{\lg n \lg(1+\eps)}\right)$ bits. 
Again applying Taylor series for $1/\lg(1+\eps) = O(1/\eps)$ gives $O(n/(\eps \lg n))$ bits. Thus the total space cost is 
$O(n/\eps)$ bits.

The time complexity of the binary search is different from a typical binary search.
The number of values of $k$ in the entire range is $O(\lg_{1+\eps} n)$, 
so the complexity of the binary search is
$O(\lg(\lg_{1+\eps} n)) = O(\lg\left(\frac{\lg n}{\lg(1+\eps)}\right)) = 
O(\lg\left(\frac{\lg n}{\eps}\right)) = O(\lg\lg n + \lg(1/\eps))$.

\begin{lemma}
There exists an $O(n/\eps)$-bit data structure that supports one-dimensional $(1+\eps)$-approximate range mode queries in $O(\lg\lg n + \lg(1/\eps))$ time.
\end{lemma}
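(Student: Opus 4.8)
The plan is to assemble the final data structure by combining the two components already constructed: the exact \emph{low-frequency} structure built from the increasing sequences $Q_k$, and the \emph{high-frequency} structure obtained by instantiating Lemma~\ref{lemma:encoding} once for every $k \in \{0, \ldots, \lfloor \lg_{1+\eps} \eps n \rfloor\}$. The space accounting has in effect already been done: the low-frequency sequences occupy $O(n/\eps)$ bits, and the sum of the per-level costs of Lemma~\ref{lemma:encoding} was shown to telescope to $O(n/\eps)$ bits as well. Storing both parts therefore costs $O(n/\eps)$ bits, matching the claimed bound, so the only remaining task is to describe the query algorithm and verify its running time and correctness.

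To answer a query $c_{a..b}$, I would first consult the low-frequency structure. As described above, a binary search over the $Q_k$ sequences decides in $O(\lg(1/\eps))$ time whether $F(c_{a..b}) < 1/\eps$; if so, it recovers the exact frequency $K = F(c_{a..b})$ and returns the position $Q_{K-1}[a]+1$, which is an exact, and hence a fortiori a $(1+\eps)$-approximate, mode. If instead $F(c_{a..b}) \geq 1/\eps$, the query is forwarded to the high-frequency structure.

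For the high-frequency case I would run the binary search over the collection of Lemma~\ref{lemma:encoding} structures, as outlined in the discussion following that lemma. Sweeping through the values of $k$, the trichotomy either lands in case~3 at some level $k$, in which case the encoding directly supplies an element $x$ with $F_x(c_{a..b}) > ((1+\eps)^{k-1/2})/\eps$; combined with the case~3 upper bound $F(c_{a..b}) < ((1+\eps)^{k+1/2})/\eps$, this yields $F_x(c_{a..b}) > F(c_{a..b})/(1+\eps)$. Otherwise the search isolates the unique $k$ with $(1+\eps)^k/\eps < F(c_{a..b}) < (1+\eps)^{k+1}/\eps$, and case~2 at level $k$ returns an element $x$ with $F_x(c_{a..b}) > (1+\eps)^k/\eps > F(c_{a..b})/(1+\eps)$. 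In both outcomes the reported element is a $(1+\eps)$-approximate mode. Since there are $O(\lg_{1+\eps} n)$ candidate values of $k$ and each trichotomy test is answered in constant time, this phase costs $O(\lg\lg n + \lg(1/\eps))$ time. Adding the two phases gives total query time $O(\lg(1/\eps)) + O(\lg\lg n + \lg(1/\eps)) = O(\lg\lg n + \lg(1/\eps))$.

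There is no genuine obstacle here, since the substantive work lies in Lemma~\ref{lemma:encoding} and the space summation preceding this statement; the only point that I would check carefully is the handoff at the boundary $F(c_{a..b}) \approx 1/\eps$. Specifically, I must confirm that the threshold used by the low-frequency structure and the smallest frequency covered by the high-frequency search agree, so that every range is handled by exactly one branch and none falls through a gap. This is immediate once one observes that the binary search over $k$ begins at $k=0$, whose threshold $(1+\eps)^0/\eps = 1/\eps$ coincides with the low-frequency cutoff, so ranges with $F(c_{a..b}) < 1/\eps$ are resolved exactly and all ranges with $F(c_{a..b}) \geq 1/\eps$ are resolved by the trichotomy. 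This completes the argument.
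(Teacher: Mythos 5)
Your proposal is correct and follows essentially the same route as the paper: the lemma is obtained by combining the exact low-frequency structure over the $Q_k$ sequences with the high-frequency structure built from Lemma~\ref{lemma:encoding} for all levels $k$, using the already-established $O(n/\eps)$-bit space summation and a binary search over the $O(\lg_{1+\eps} n)$ levels for the stated query time. Your additional check of the handoff at the $1/\eps$ frequency boundary is consistent with the paper's construction.
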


\label{sec:constant}

\subparagraph*{High Frequencies: $O(n/\eps)$-Bits $O(\lg(1/\eps))$ Query Time.} 
The bottleneck of the approach described in the previous section is the binary search on $k$.
To speed up queries, we store an additional data structure that uses $O(n)$ bits but returns a 4-approximate range mode.
   
\begin{lemma}
\label{theorem:mode4}
There exists an $O(n)$-bit data structure that supports one-dimensional approximate range mode queries in constant time with approximation factor $4$.
\end{lemma}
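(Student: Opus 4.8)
The plan is to build a collection of leveled structures, one for each threshold $2^k$ with $k=1,\ldots,\lfloor \lg n\rfloor$, in exactly the spirit of Lemma~\ref{lemma:encoding}. At level $k$ I place a grid of starting points $s_i = i\cdot 2^{k-1}+1$ spaced $2^{k-1}$ apart, and for each grid point I let $r_i$ be the smallest index with $F(c_{s_i..r_i}) \ge 2^k$; the color $c_{r_i}$ is the unique element attaining this frequency, so returning the \emph{position} $r_i$ suffices in the encoding model. As in the proof of Lemma~\ref{lemma:encoding}, the increasing sequence $\{r_i\}$ is stored in the bit vector of Patrascu~\cite{Patrascu08}, giving constant-time access and costing $O((n/2^{k-1})\lg 2^{k-1} + n/\lg^2 n) = O(k\, n/2^{k-1} + n/\lg^2 n)$ bits at level $k$. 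The first term sums over $k$ as the convergent series $O(n\sum_k k/2^{k-1}) = O(n)$, and the second contributes $O(n/\lg n)$ across the $O(\lg n)$ levels, so the whole collection occupies $O(n)$ bits.

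For correctness I would run the same two-case analysis as in Lemma~\ref{lemma:encoding}. Given a query $c_{a..b}$ and a level $k$, let $s_i$ be the largest grid point at most $a$, so $a-s_i < 2^{k-1}$. If $b \ge r_i$ (\emph{case A}), then $r_i \ge a$ (otherwise the threshold would be met inside the fewer than $2^{k-1}$ positions of $c_{s_i..a-1}$), and since fewer than $2^{k-1}$ of the at least $2^k$ occurrences of $c_{r_i}$ lie before $a$, that color occurs more than $2^{k-1}$ times in $c_{a..r_i}\subseteq c_{a..b}$; if $b < r_i$ (\emph{case B}), then $F(c_{a..b}) \le F(c_{s_i..b}) < 2^k$. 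A short calculation shows case A is monotone in $k$ (once it fails it stays failed, because the coarser grid only moves $s_i$ earlier by fewer than $2^k$ positions). Hence, taking $k^\star$ to be the largest level in case A, level $k^\star$ exhibits an element of frequency more than $2^{k^\star-1}$, while level $k^\star+1$ certifies $F(c_{a..b}) < 2^{k^\star+1}$; the ratio is below $2^{k^\star+1}/2^{k^\star-1}=4$, so the reported position is a $4$-approximate mode. When $F(c_{a..b})\le 4$ no level is in case A and any position in $[a,b]$ already works.

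The main obstacle is to locate $k^\star$ in \emph{constant} rather than $O(\lg\lg n)$ time: the trichotomy gives a constant-time test at each level, but $k^\star$ ranges over $\Theta(\lg n)$ levels and naive binary search costs $O(\lg\lg n)$. I would first use the query length to cap the search, since case A forces the found frequency to be at most the range size and hence $k^\star \le \lfloor\lg(b-a+1)\rfloor+1$, and then replace the binary search by an $O(1)$ predecessor step over the monotone sequence of per-level endpoints associated with a fixed start region (in the style of a sampled or fusion-tree-like navigation). The delicate part is reconciling this $O(1)$ level selection with the $O(n)$-bit budget, i.e.\ without reintroducing the $\Theta(n\lg n)$-bit cost of tabulating an answer for every (position, level) pair; I expect this to require storing navigation information only at a coarse superblock granularity and resolving short ranges through a universal lookup table, which is the standard route for pushing such structures from $O(n)$ words down to $O(n)$ bits.
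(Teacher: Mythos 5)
Your leveled construction and the case-A/case-B analysis are sound, and the space accounting (a convergent series over the $O(k\,n/2^{k-1})$ terms plus $O(n/\lg n)$ for the redundancy terms) correctly gives $O(n)$ bits. But the entire content of this lemma is the \emph{constant} query time, and that is exactly the step you leave unresolved. Your structure, as described, supports an $O(1)$ test per level and hence an $O(\lg\lg n)$ binary search over the $\Theta(\lg n)$ levels --- which is precisely the bound of the paper's intermediate $O(\lg\lg n + \lg(1/\eps))$ structure, not the claimed $O(1)$. Capping the search at $\lfloor\lg(b-a+1)\rfloor+1$ does not help for long queries, and the proposed fix (``an $O(1)$ predecessor step over the monotone sequence of per-level endpoints associated with a fixed start region'') runs into the obstacle you yourself identify: the per-level endpoints $r^{(k)}(a)$ for a fixed left endpoint $a$ live on grids of different granularities, so collecting them into one predecessor structure per position costs $\Theta(n\lg^2 n)$ bits, while anchoring them only at coarse superblock positions degrades the approximation unless the query is guaranteed to straddle the anchor. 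You do not show how to escape this tension, so the proof is incomplete at its critical point.

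The paper resolves it by reorganizing the decomposition around \emph{positions} rather than frequency levels: a balanced divide-and-conquer on $[1,n]$ where each node stores two fusion trees over only the $O(\lg n)$ endpoints $e_j$ with $F(c_{e_j..n/2})=2^j$ (and symmetrically to the right of the midpoint). Every query straddles the midpoint of exactly one canonical node, found in $O(1)$ time as the highest set bit of $a\oplus b$; since a fusion tree on $O(\lg n)$ keys of $O(\lg n)$ bits answers predecessor/successor in constant time, the best level on each half of the query is found in $O(1)$, and combining the two halves loses only another factor of $2$, giving the factor-$4$ guarantee. The space recurrence $S(n)=2S(n/2)+O(\lg^2 n)$ gives $O(n)$ bits. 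In short: your anchors are level-indexed grids queried at an unknown level, whereas the paper's anchors are position-indexed midpoints each carrying all levels in a single constant-time predecessor structure. If you want to salvage your route, you would essentially have to reinvent this midpoint-anchored fusion-tree arrangement; as written, the lemma is not proved.
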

\begin{proof}
We assume $n$ is a power of $2$. We construct a network of fusion trees \cite{Fredman90}. At the top level, we store two fusion trees $\F_{n/2, l}$ and $\F_{n/2, r}$. The tree $\F_{n/2, l}$ contains the values $e_1, \ldots, e_{\lg n}$, where $e_j$ is the largest index satisfying $F(c_{e_j..n/2}) = 2^j$. $\F_{n/2, r}$ contains the values $e_1, \ldots, e_{\lg n}$, where $e_j$ is the smallest index satisfying $F(c_{n/2..e_j}) = 2^j$. If a query crosses the middle index $n/2$, we query $\F_{n/2,l}$ to get $p_1$, the smallest value greater than or equal to $a$, and we query $\F_{k,r}$ to get $p_2$, the largest value less than or equal to $b$.
We return $p_1$ if $F(c_{p_1..n/2}) > F(c_{n/2..p_2})$ and $p_2$ otherwise.
Clearly, $p_1$ is a $2$-approximate mode for $c_{a..n/2}$ and $p_2$ is a $2$-approximate mode for $c_{n/2..b}$. The true mode has at least half its occurrences in one of these regions, so the value we return is a $4$-approximate mode for $c_{a..b}$.

If the query does not cross the middle, it falls entirely in one of the two sides. We may therefore repeat our fusion tree scheme in a divide and conquer fashion, recursing on the two halves. Eventually, there will be a level of the recursion that intersects the query.

To analyze the total space used, we use the recurrence $S(n) = 2S(n/2) + O(\lg^2 n)$, which solves to $S(n) = O(n)$ bits.

To analyze the time complexity of the query, observe that the fusion trees on $O(\lg n)$ elements with word size $O(\lg n)$ support the necessary predecessor/ successor queries in constant time. However, we must know which fusion trees to query. This involves finding the level of recursion in which the query range intersects the midpoint. This is equivalent to the highest set bit in the XOR of $a$ and $b$, which can be determined in constant time in the word RAM model. With this information, we can do the necessary arithmetic to find the appropriate fusion trees to query, and thus query takes constant time.
\end{proof}

We now return to the $(1+\eps)$-approximation. To answer a query $c_{a..b}$, we first query the 4-approximation structure of Lemma~\ref{theorem:mode4}, which returns a corresponding frequency $x$. We now know $x \leq F(c_a,\ldots,c_b) \leq 4x$. We have thus shrunk the number of values of $k$ from Lemma~\ref{lemma:encoding} that need be tested for the $(1+\eps)$-approximation from $\lceil\lg_{1+\eps} n\rceil$ to $\lceil\lg_{1+\eps} (4x/x)\rceil = \lceil\lg_{1+\eps}4\rceil$. Thus our binary search now takes time $O(\lg\left(\frac{2}{\lg(1+\eps)}\right)) = O(\lg(1/\eps))$.

\begin{theorem}
There exists an $O(n/\eps)$-bit data structure that supports one-dimensional approximate range mode queries in $O(\lg(1/\eps))$ time with approximation factor $1+\eps$.
\end{theorem}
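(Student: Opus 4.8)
The plan is to combine the components already established into a single structure that routes each query according to the magnitude of its mode frequency. First I would keep the low-frequency structure verbatim: the succinctly encoded increasing sequences $Q_0,\ldots,Q_{\lceil 1/\eps\rceil}$, stored in $O(n/\eps)$ bits, which by binary searching the test $F(c_{a..b})>k \Leftrightarrow b>Q_k[a]$ decide in $O(\lg(1/\eps))$ time whether $F(c_{a..b})\le\lceil 1/\eps\rceil$ and, in that case, return an exact mode. A query only falls through to the high-frequency machinery when this search certifies $F(c_{a..b})>\lceil 1/\eps\rceil$, so the two regimes compose without overlap.

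For the high-frequency regime, I would instantiate the trichotomy structure of Lemma~\ref{lemma:encoding} for every level $k\in\{0,\ldots,\lfloor\lg_{1+\eps}\eps n\rfloor\}$ and binary search for the threshold level with $(1+\eps)^k/\eps<F(c_{a..b})<(1+\eps)^{k+1}/\eps$. As already noted, landing in case~3 at any probed level, or reaching this threshold (where case~2 holds for $k$), directly exhibits an element whose frequency is a $(1+\eps)$-approximate mode. Summing the per-level cost $O(k\,\eps\,n/(1+\eps)^k)$ over all $k$ telescopes, via $\sum_k k/(1+\eps)^k=\Theta(1/\eps^2)$, to $O(n/\eps)$ bits, while the $n/\lg^2 n$ terms add only a lower-order $O(n/(\eps\lg n))$; the total is therefore $O(n/\eps)$ bits.

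The only remaining gap is the query time, since a binary search over all $O(\lg_{1+\eps} n)$ levels would cost $O(\lg\lg n+\lg(1/\eps))$. To remove the $\lg\lg n$ term I would first consult the $O(n)$-bit $4$-approximation structure of Lemma~\ref{theorem:mode4}, obtaining in constant time a frequency $x$ with $x\le F(c_{a..b})\le 4x$. Because the level scale is $(1+\eps)^k/\eps$, the true threshold level now lies within a window of $\lceil\lg_{1+\eps}4\rceil=O(1/\eps)$ consecutive values of $k$, so the binary search is confined to $O(1/\eps)$ candidates and costs $O(\lg(\lg_{1+\eps}4))=O(\lg(1/\eps))$, using $\lg(1+\eps)=\Theta(\eps)$. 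Adding the three space terms gives $O(n/\eps)+O(n/\eps)+O(n)=O(n/\eps)$ bits with overall $O(\lg(1/\eps))$ query time.

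The step I expect to demand the most care is this final time argument: I must confirm that the coarse estimate $x$ from the fusion-tree structure, whose guarantee is only $x\le F\le 4x$, is genuinely compatible with the exponential level scale of Lemma~\ref{lemma:encoding}, so that the $O(1/\eps)$ levels straddling $x$ provably contain the threshold rather than missing it through a mismatched constant. Everything else is bookkeeping over already-proven lemmas.
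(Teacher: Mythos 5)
Your proposal is correct and follows essentially the same route as the paper: the exact low-frequency structure via the sequences $Q_k$, the per-level trichotomy of Lemma~\ref{lemma:encoding} summed to $O(n/\eps)$ bits, and the constant-time $4$-approximation of Lemma~\ref{theorem:mode4} to shrink the binary search to $\lceil\lg_{1+\eps}4\rceil$ candidate levels, giving $O(\lg(1/\eps))$ query time. The final compatibility check you flag is exactly the paper's (one-line) argument that the window $[x,4x]$ spans only $\lceil\lg_{1+\eps}(4x/x)\rceil$ levels of the scale $(1+\eps)^k/\eps$.
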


\section{Dynamic Approximate Range Mode}
In this section we consider the dynamic variant of the approximate range mode problem. 
We maintain our sequence $c_{a..b}$ under insertions and deletions, so that for an arbitrary query range $c_{a..b}$ an approximate range mode can be found efficiently.

The high-level approach is as follows. Similar to Section \ref{sec:encoding}, 
for each $j \leq \lg_{1+\eps} n$, our goal is to maintain a set of intervals $\cI_j$ such that the mode of a query range $c_{a..b}$ occurs more than $(1+\eps)^j$ times if and only if $c_{a..b}$ contains an interval in $\cI_j$.  
Then, for all $j$ and each interval $c_{l..r}$ in $\cI_j$ we maintain the points $(l,r,j)$ in a data structure $\cD$ that supports the following range queries: 
given a query point $(a, b)$, 
return the highest $j$ such that $a \leq l$ and $r \leq b$ for at least one point $(l,r,j)$ in $\cD$. 

However, unlike the sets of intervals maintained in Section \ref{sec:encoding}, 
our construction in this section satisfies the property that a single update affects only a {\it small} number of intervals in the sets $\cI_j$ for all $j$. We now proceed with the technical argument.

Let $S_{x}$ denote the set of positions of the element $x$ in the sequence $c_{1..n}$.
We will denote by $S_{x}[i]$ the position of the $i^\mathrm{th}$ occurrence of element $x$.  
Let $I_x(l,r)$ denote the interval $c_{S_{x}[l]..S_{x}[r]}$. 

Now let $\delta = 1+\eps' = (1+\eps)^{1/3}$ and fix $x$. There are $f= F_x(c_{1..n})$ occurrences of element $x$ in the full range $c_{1..n}$. We will maintain a subset of the $f \choose 2$ possible intervals $I_x (l,r)$ in sets $\cI_{j,x}$, $1\le j \le \ceil{\lg_{\delta} n}$. We will not have need for nested intervals in $\cI_{j,x}$; therefore, we can number each interval of $\cI_{j,x}$ from left to right with $s_k$ the start of interval $k$ and $e_k$ the end of interval $k$, satisfying $s_k \leq s_{k+1}$, $e_k \leq e_{k+1}$. We maintain the following two invariants on the intervals of $\cI_{j,x}$: (1) $\delta^j \leq e_k - s_k \leq \delta^{j+1}$, and (2) $(\eps'/2) \delta^j \leq s_{k+1} - s_k  \leq \eps' \delta^j$, and the number of positions of $S_x$ not covered by an interval of $\cI_{j,x}$ at either end is at most $(\eps'/2) \delta^j$ (so $s_1 \leq (\eps'/2) \delta^j$ and $f - r_{|\cI_{j,x}|} \leq (\eps'/2) \delta^j$).
From our invariants we get the following proposition.
\begin{proposition}
\label{intersectp}
An interval $I_x(l,r)$ intersects at most $2(r-l+1)/(\eps' \delta^j) + O(1/\eps')$ intervals of $\cI_{j,x}$.
\end{proposition}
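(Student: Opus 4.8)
The plan is to translate ``intersection'' of subsequences into an elementary counting condition on the $S_x$-indices, and then bound the number of admissible indices using the two invariants. First I would observe that, since $S_x$ is an increasing list of positions, the subsequence $I_x(l,r)$ and the $k$th interval of $\cI_{j,x}$ (with left endpoint $s_k$ and right endpoint $e_k$) share a position of $c_{1..n}$ if and only if $s_k \leq r$ and $e_k \geq l$. So the quantity to bound is simply $|\{k : s_k \leq r \text{ and } e_k \geq l\}|$.

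Next I would confine all such $k$ to a short window of starting indices. The condition $s_k \leq r$ gives the right end of the window directly. For the left end, I would invoke the upper bound of invariant (1), namely $e_k \leq s_k + \delta^{j+1}$: if $e_k \geq l$ then $s_k \geq e_k - \delta^{j+1} \geq l - \delta^{j+1}$. Hence every intersecting interval has $s_k \in [l - \delta^{j+1}, r]$, a window of length $(r-l) + \delta^{j+1}$.

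Then I would count how many interval-starts can fall in a window of this length. By the lower bound of invariant (2), consecutive starts satisfy $s_{k+1} - s_k \geq (\eps'/2)\delta^j$, so a window of length $L$ contains at most $L/((\eps'/2)\delta^j) + 1 = 2L/(\eps'\delta^j) + 1$ starts. Substituting $L = (r-l) + \delta^{j+1}$ and simplifying, the count is at most $2(r-l)/(\eps'\delta^j) + 2\delta^{j+1}/(\eps'\delta^j) + 1$. Since $\delta^{j+1}/\delta^j = \delta = (1+\eps)^{1/3} = O(1)$ for $\eps \leq 1$, the last two summands collapse into $O(1/\eps')$, and $2(r-l)/(\eps'\delta^j) \leq 2(r-l+1)/(\eps'\delta^j)$, which yields the claimed bound.

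The individual steps are routine; the only point requiring care is choosing the correct direction of each invariant. I must use the length \emph{upper} bound $e_k \leq s_k + \delta^{j+1}$ to enlarge the start-window, since the lower bound $e_k - s_k \geq \delta^j$ would not let me convert $e_k \geq l$ into a lower bound on $s_k$; and I must use the spacing \emph{lower} bound $s_{k+1} - s_k \geq (\eps'/2)\delta^j$ to keep the count small. I should also verify that truncating the window at $s_1$ when $l - \delta^{j+1} < s_1$ only decreases the count, so the stated bound holds unconditionally.
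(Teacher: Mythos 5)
Your proof is correct and follows essentially the same route as the paper's: confine the start points $s_k$ of intersecting intervals to a window of length $(r-l)+O(\delta^{j+1})$ using the length upper bound from invariant (1), then divide by the minimum spacing $(\eps'/2)\delta^j$ from invariant (2). The paper's version is a one-sentence sketch with a slightly more generous window of $r-l+1+2\delta^{j+1}$; your writeup just fills in the same counting argument in more detail.
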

\begin{proof}
By Invariant (2), we have a gap size of between $(\eps'/2) \delta^j$ and $\eps' \delta^j$ elements between consecutive starting points of intervals of $\cI_{j,x}$. Since each interval has size at most $\delta^{j+1}$, the total number of intervals intersecting $I_x(l,r)$ is at most $2(r-l+1+2\delta^{j+1})/(\eps' \delta^j)$. 
\end{proof}

For each interval $I_x(s_k, e_k)$ of $\cI_{j,x}$, let $\mathtt{pot}(I_x(s_k, e_k)) = e_k - s_k + 1$ denote the number of elements of $S_x$ (and thus positions in the original sequence $c_{1..n}$) that fall between $s_k$ and $e_k$. When we insert or delete an element $x$, by Proposition \ref{intersectp}, we must update the $\mathtt{pot}$ values of $O(1/\eps')$ intervals of $\cI_{j,x}$.  Across all $j$, $1\le j \le \ceil{\log_{\delta} n}$, $O((1/\eps') \lg_\delta n)$ intervals are affected.

During the updates, each affected $\mathtt{pot}(I_x(s_k, e_k))$ value is incremented or decremented by one. If, for an interval $I_x(s_k, e_k)$ in $\cI_{j,x}$, Invariant (1) is violated by the update, then we rearrange the intervals in the neighborhood of $I_x(s_k, e_k)$ as follows. Consider all intervals of $\cI_{j,x}$ that intersect with $I_x(s_k, e_{k+1})$. By Proposition \ref{intersectp}, there are $O(1/\eps')$ such intervals. We remove said intervals and create new intervals in their place with exactly $\ceil{(1+\eps'/2)\delta^j}$ positions of $x$ that fall in each interval. Furthermore, we space them so that Invariant (2) holds when the new intervals are inserted into $\cI_{j,x}$.

To build these intervals, we must be able to efficiently search for elements by rank in $S_x$. As this will not dominate the update cost, we can use a typical order statistic tree, with $O(\lg f) = O(\lg n)$ query and update time. We may construct the new intervals satisfying invariants (1) and (2) with a constant number of queries on $S_x$ per interval, thus in $O((1/\eps') \lg n)$ time overall.

We can analyze the total cost of rebuilds as follows. On each update, we affect $O(1/\eps')$ intervals at each level. However, the affect on $\mathtt{pot}$ is the same for each interval, and when we rebuild, we rebuild a superset of the intervals affected on update. It follows that the total amortized cost of rebuilds is
$\sum_{j=1}^{\ceil{\log_\delta n}} (1/\delta^j) \cdot O((1/\eps') \lg n) = O((\lg n)/\eps'^2)$ per update.

Further, in each update we must update $S_x$ and update the $\mathtt{pot}$ values. These take time $O(\lg n)$ and $O((1/\eps') \lg_\delta n) = O(\lg n / \eps'^2)$, respectively. So far we pay $O(\lg n / \eps'^2)$ per update, but we have yet to describe the data structure that holds each $\cI_{j,x}$, which will also need to be updated during rebuilds.

Consider each interval $I_x(s_k, e_k)$ of $\cI_{j,x}$ as a point $(s_k, e_k, j)$. 
We store each interval of $\cI_{j,x}$, 
across all $1\le j \le \ceil{\lg_{\delta} n}$ and all $x$, 
in a data structure $\cD$ that supports the following range queries: given a query point $(a, b)$, 
return the highest $j$ such that $a \leq l$ and $r \leq b$ for at least one point $(l,r,j)$ in $\cD$. 
Associated with each point, we keep the element $x$ from which it originated.

We first must consider the number of intervals (and thus points) stored in $\cD$. As before, we assume element $x$ occurs $f=F_x(c_{1..n})$ times in $c_{1..n}$. 
Then $|\cI_{j,x}| = O(f/\ceil{\eps' \delta^j})$. 
Across all levels, we can bound the total number of intervals at $O(f \lg_\delta n) = O(f\lg n / \eps')$ or $O(f / \eps'^2)$. 
Accounting for all $x$, the number of intervals in $\cD$ will be $O(m) = O(\min(n\lg n/\eps', n/\eps'^2))$.

\begin{lemma}
\label{lemma:3d3side}
Data structure $\cD$ can be stored in $O(n\lg n)$ bits, where $n$ is the number of elements in $\cD$. Queries and updates can be supported in $O(\lg n/\lg\lg n)$ time.
\end{lemma}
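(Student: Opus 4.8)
The plan is to recognize the query on $\cD$ as a dynamic planar dominance-maximum query and to build a search tree of large fan-out so that each level contributes a candidate answer in constant time. Each stored interval becomes a planar point $(l,r)$ carrying the weight $j$, and the query $(a,b)$ asks for the largest weight among points in the quadrant $\{l \geq a,\ r \leq b\}$. After negating the first coordinate this is a standard two-sided dominance region, so the task is a weighted two-dimensional dominance range-maximum query under insertions and deletions. First I would reduce both coordinates to rank space in $[1,O(n)]$, maintaining the order-preserving labels through the primary search structure itself; this is what makes the word-RAM packing below applicable.

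Next I would store the points in a weight-balanced B-tree keyed on the $r$-coordinate with fan-out $\Theta(\lg^{\gamma} n)$ for a small constant $\gamma>0$, so that its height is $O(\lg n/\lg\lg n)$. The region $\{r \leq b\}$ decomposes into $O(\lg n/\lg\lg n)$ canonical subtrees, namely the left-hanging children along the root-to-$b$ search path; a fusion tree stored in each node locates the splitting child, and hence these canonical subtrees, in $O(1)$ time per level. For each canonical subtree I need the maximum weight $j$ among its points with $l \geq a$, a one-dimensional suffix-maximum, and the overall answer is the maximum of the $O(\lg n/\lg\lg n)$ per-level candidates. Thus if each candidate is produced in $O(1)$ amortized time the query costs $O(\lg n/\lg\lg n)$. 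For updates, an insertion or deletion changes the summaries of only the $O(\lg n/\lg\lg n)$ nodes on a single path, and weight-balanced rebalancing amortizes each node's summary reconstruction to $O(1)$ per update, giving $O(\lg n/\lg\lg n)$ amortized update time.

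The subtle point, and the step I expect to be the main obstacle, is keeping the space at $O(n)$ words, i.e. $O(n\lg n)$ bits, while still resolving each level in $O(1)$. A plain range tree would store every point in the secondary structure of each of its $O(\lg n/\lg\lg n)$ ancestors and inflate the space by that factor; instead each point must be stored only once, with every node retaining merely an $O(\lg^{\gamma} n)$-word, word-packed summary of its children's suffix-maxima in $l$ that can be both evaluated against a query boundary $a$ and recombined after an update in constant time. Designing this summary so that the genuine two-dimensional interaction, restricting the suffix-maximum in $l$ to a node's $r$-range, is handled in $O(1)$ per level without replication is the crux; alternatively one can invoke a known linear-space dynamic planar dominance-maximum structure as a black box and simply verify that its guarantees specialize to $O(n\lg n)$ bits and $O(\lg n/\lg\lg n)$ time in this setting. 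Either way, once the per-level constant-time claim is in hand, summing over the $O(\lg n/\lg\lg n)$ levels for both queries and updates completes the argument.
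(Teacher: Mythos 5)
Your high-level plan coincides with the paper's: view the query as a dominance maximum over points $(l,r)$ with weight $j$, build a tree of fan-out $\lg^{\Theta(1)}n$ over the $r$-coordinate so that the height is $O(\lg n/\lg\lg n)$, navigate with fusion trees, and charge $O(1)$ per level. But the step you yourself flag as the crux --- producing, in $O(1)$ time at each node of the search path, the maximum $j$ over all points lying in the left-hanging children of that node and satisfying $l\ge a$ --- is exactly the content of the paper's proof, and it is absent from yours. Note that there are $\Theta(\lg^{\gamma}n)$ left-hanging child subtrees per level, so ``$O(1)$ per canonical subtree'' is too slow; you need a single constant-time evaluation per node that aggregates over a prefix of its children. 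The paper supplies this in its ``special case'' (second coordinate bounded by $\lg^{\eps}n$, i.e., a child index): it keeps only the aggregates $\max_{r,j}$ (the largest $l$ among points with child index $r$ and weight $j$), sorts them, packs them into blocks of $\Theta(\lg^{1-\eps}n)$ values so that a global $o(n)$-bit lookup table answers any within-block query, stores a small local table per block giving, for each $r$, the best preceding block, and uses a fusion tree for the predecessor of $a$. Neither of your fallbacks closes this gap: an $O(\lg^{\gamma}n)$-word summary per node is not even enough room to record the suffix-maxima of all children across all $O(\lg n)$ weight classes, and there is no off-the-shelf linear-space dynamic dominance-maximum structure with $O(\lg n/\lg\lg n)$ query and update time that you can cite as a black box --- the paper has to construct one.

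A secondary omission is the space accounting. With an $\Omega(\lg^{2+\eps}n)$-bit summary in every node, the nodes near the bottom of the tree (which may contain very few points) would dominate, so ``each point stored once plus small per-node summaries'' does not by itself yield $O(n\lg n)$ bits. The paper resolves this by storing the point sets explicitly in the lowest $O(1/\eps)$ levels and observing that every remaining node $u$ satisfies $|P(u)|\ge\lg^{2+\eps}n$, so its summary costs only $O(|P(u)|)$ bits, which sums to $O(n\lg n/\lg\lg n)$ over all high nodes.
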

We delay the proof of Lemma \ref{lemma:3d3side} to Appendix \ref{appendix}.

Now suppose we are given a query range $c_{a..b}$. We find the largest $j$ such that some interval from $\cI_{j,x}$ for some $x$ is contained in $c_{a..b}$. Using data structure $\cD$ from Lemma \ref{lemma:3d3side}, we can compute the index $j$ in $O(\lg n/ \lg \lg n)$ time. We return the element $x$ associated with $j$.

\begin{lemma}
The element $x$ returned is a $(1+\eps)$-approximate mode of query range $c_{a..b}$.
\end{lemma}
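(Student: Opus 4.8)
The plan is to verify the approximation guarantee $(1+\eps)\,F_x(c_{a..b}) \ge F(c_{a..b})$ directly from the two invariants on the sets $\cI_{j,x}$. Let $j^*$ be the largest level returned by the query to $\cD$, so that there is an interval $I_x(s_k,e_k)\in\cI_{j^*,x}$ contained in $c_{a..b}$ (i.e.\ $a\le S_x[s_k]$ and $S_x[e_k]\le b$), while \emph{no} element possesses a contained interval at any level exceeding $j^*$. I would first record the easy lower bound: the contained interval holds $e_k-s_k+1$ occurrences of $x$, all inside $c_{a..b}$, so Invariant~(1) gives $F_x(c_{a..b})\ge e_k-s_k+1\ge \delta^{j^*}+1$.

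The crux is to bound $F(c_{a..b})$ from above using the maximality of $j^*$. Let $y$ be the true mode, set $M=F_y(c_{a..b})=F(c_{a..b})$, and let the occurrences of $y$ inside $c_{a..b}$ be the $p$-th through $q$-th occurrences of $y$, so $M=q-p+1$. I would argue that if $M$ were at least $\delta^{j^*+2}+\eps'\delta^{j^*+1}+1$, then $\cI_{j^*+1,y}$ must already contain an interval lying within occurrence indices $[p,q]$, hence contained in $c_{a..b}$, contradicting the choice of $j^*$. To exhibit such an interval, take the smallest interval start $s\ge p$ in $\cI_{j^*+1,y}$; the spacing part of Invariant~(2) forces $s\le p+\eps'\delta^{j^*+1}$, and Invariant~(1) bounds its end by $s+\delta^{j^*+2}\le p+\eps'\delta^{j^*+1}+\delta^{j^*+2}\le q$. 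This yields $M<(\delta^2+\eps'\delta)\,\delta^{j^*}+1$.

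It then remains to combine the two bounds. Writing $\delta=1+\eps'$, a short calculation gives $\delta^2+\eps'\delta=1+3\eps'+2\eps'^2<(1+\eps')^3=\delta^3=1+\eps$, so $M<\delta^{j^*+3}+1\le \delta^3(\delta^{j^*}+1)\le \delta^3\,F_x(c_{a..b})=(1+\eps)\,F_x(c_{a..b})$, where the middle inequality uses $1\le\delta^3$ and the last uses the lower bound from the first paragraph. This is exactly the claimed $(1+\eps)$-approximation.

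The step I expect to require the most care is the existence of the interval in $\cI_{j^*+1,y}$, specifically the boundary case in which $p$ lies past the last start of $\cI_{j^*+1,y}$ so that no start $s\ge p$ exists. Here I would invoke the coverage part of Invariant~(2): at most $(\eps'/2)\delta^{j^*+1}$ occurrences of $y$ are left uncovered beyond the last interval, so $p$ sitting in this tail forces $M\le f_y-p+1<\eps'\delta^{j^*+1}+\delta^{j^*+2}+1$ (where $f_y=F_y(c_{1..n})$), meaning $M$ already lies below the threshold and the upper bound on $M$ holds trivially. A symmetric remark handles the left end, and with this the argument is complete.
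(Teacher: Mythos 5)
Your proposal is correct and follows essentially the same route as the paper's proof: the lower bound $F_x(c_{a..b})\ge\delta^{j^*}$ comes from Invariant~(1) applied to the contained interval, and the upper bound on $F(c_{a..b})$ comes from showing that a sufficiently frequent element would force a contained interval at level $j^*+1$, contradicting the maximality of $j^*$, using the size bound $\delta^{j^*+2}$ and the start-spacing bound $\eps'\delta^{j^*+1}$ exactly as the paper does. Your version is slightly more meticulous about the additive $+1$'s and the boundary case where no interval start lies at or after $p$, but the underlying argument is the same.
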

\begin{proof}
If $c_{a..b}$ contains an interval from $\cI_{j,x}$, then $x$ occurs at least $\delta^j$ times in $c_{a..b}$. On the other hand, we can show that if some $y$ occurs $\delta^{j+3}$ times in $c_{a..b}$, then $c_{a..b}$ contains an interval from $\cI_{j+1,y}$. Recall $\delta^{j+3} = (1+\eps')\delta^{j+2}$. Each interval of $\cI_{j+1,y}$ has size at most $\delta^{j+2}$ and there is a gap of at most $\eps' \delta^j$ elements of $y$ between the start of every interval in $\cI_{j+1,y}$. Then since $\eps' \delta^{j+1} + \delta^{j+2} < (1+\eps')\delta^{j+2}$, it must be that an interval of $\cI_{j+1,y}$ falls in the query range $c_{a..b}$. We therefore know $\delta^j \leq F(c_{a..b}) < \delta^{j+3} = (1+\epsilon)\delta^j$. It follows that $x$ is a $(1+\eps)$-approximate mode of query range $c_{a..b}$.
\end{proof}

This gives us the main theorem for the section.

\begin{theorem}
There exists an $O(m \lg m)$-bit data structure, where $m = \min(n \lg n/\eps, n/\eps^2)$ that answers $(1+\eps)$-approximate range mode queries in $O(\lg m / \lg \lg m)$ time. Insertions and deletions are supported in $O(\lg n/\eps^2)$ time.
\end{theorem}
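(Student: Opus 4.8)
The plan is to assemble the components built up in this section into a single structure and then verify its three guarantees — space, query time, and update time — independently. The structure maintains, for every color $x$ and every level $1 \le j \le \ceil{\lg_\delta n}$, the interval set $\cI_{j,x}$ subject to Invariants (1) and (2); an order-statistic tree on each set of positions $S_x$; and the three-dimensional dominance structure $\cD$ of Lemma \ref{lemma:3d3side}, storing one point $(S_x[s_k], S_x[e_k], j)$ per interval. Since correctness of the reported color is already established by the preceding lemma, the theorem reduces entirely to a resource accounting.

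For space, I would recall that the total number of intervals, summed over all $x$ and all levels, is $O(m)$ with $m = \min(n\lg n/\eps, n/\eps^2)$, so $\cD$ stores $O(m)$ points and, by Lemma \ref{lemma:3d3side} applied with $O(m)$ as its number of elements, occupies $O(m\lg m)$ bits. The order-statistic trees over the $S_x$ use $O(n)$ words, i.e.\ $O(n\lg n)$ bits; as $\eps \le 1$ forces $m \ge n$, this is absorbed into $O(m\lg m)$, giving the claimed bound.

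The query bound is immediate. Given $c_{a..b}$, I would pose the dominance query $(a,b)$ to $\cD$, obtaining in $O(\lg m/\lg\lg m)$ time (Lemma \ref{lemma:3d3side}) the largest level $j$ for which some stored interval is contained in $c_{a..b}$, and report the color $x$ attached to that point; the $(1+\eps)$-approximation is exactly the content of the preceding lemma.

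The delicate part, which I expect to be the main obstacle, is the update bound. An insertion or deletion of a color $x$ costs $O(\lg n)$ for $S_x$ and $O((1/\eps')\lg_\delta n) = O(\lg n/\eps'^2)$ to adjust the $\mathtt{pot}$ values of the $O(1/\eps')$ intervals it touches at each of the $O(\lg_\delta n)$ levels (Proposition \ref{intersectp}); the order-statistic work spent on rebuilds is, by the amortized analysis already performed, $O(\lg n/\eps'^2)$ per update. The remaining task is to charge the attendant updates to $\cD$. Each rebuild at level $j$ replaces $O(1/\eps')$ points, each replacement costing an insertion and a deletion in $\cD$ at $O(\lg m/\lg\lg m)$ time; using the amortized rebuild frequency of $O(1/\delta^j)$ per update and the convergent sum $\sum_{j} \delta^{-j} = O(1/\eps')$, the number of $\cD$ operations per update is $O(1/\eps'^2)$. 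Invoking $\lg m/\lg\lg m = O(\lg n)$, valid because $m$ is polynomial in $n$, these operations contribute only $O(\lg n/\eps'^2)$, and since $\eps' = (1+\eps)^{1/3}-1 = \Theta(\eps)$ every term is $O(\lg n/\eps^2)$. The subtlety to nail down is precisely that the $\cD$ updates fold into the same amortized budget as the interval rebuilds, rather than multiplying the rebuild cost by the per-operation cost of $\cD$.
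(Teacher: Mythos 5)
Your proposal is correct and follows essentially the same route as the paper: bound the number of points in $\cD$ by $O(m)$ for the space, invoke Lemma \ref{lemma:3d3side} for the query, and sum the $S_x$, $\mathtt{pot}$, rebuild, and $\cD$-maintenance costs for the update, using $\eps' = \Theta(\eps)$ at the end. In fact your charging of the $O(1/\eps'^2)$ amortized point replacements in $\cD$ per update is spelled out more carefully than in the paper, which simply notes that a $\cD$ operation costs $O(\lg m/\lg\lg m) = O(\lg n/\eps^2)$.
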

\begin{proof}
We have $(1+\eps')^3 = (1+\eps)$ and $(1+\eps)^3 = \eps^3 + 3\eps^2 + 3\eps + 1$. The smallest exponent dominates $O(1/\eps')$ since $\eps \leq 1$ and thus $\eps' < \eps \leq 1$. Thus we have $1/\eps' = O(1/\eps)$. As previously stated, the number of intervals in $\cD$ is $O(m)$, where $m = \min(n \lg n/\eps, n/\eps^2)$. The space bound for $\cD$ is thus $O(m \lg m) = \Omega(n \lg n)$ bits, which dominates the total space cost. The query time is $O(\lg m / \lg \lg m)$.

The update cost has four components: Updating $\cD$, updating $S_\alpha$, and updating $\mathtt{pot}$ values for all affected intervals. As previously mentioned, the latter three are dominated by $O(\lg n/\eps'^2) = O(\lg n/\eps^2)$. Via Lemma \ref{lemma:3d3side}, the cost of updating $\cD$ is $O(\lg m/ \lg \lg m)$. Since $m$ is no more than $n/\eps^2$, $\lg m/\lg \lg m$ is dominated by $O(\lg n/\eps^2)$. In total, the cost of updates is $O(\lg n/\eps^2)$.
\end{proof}

We can use our dynamic data structure to obtain a result for approximate range mode queries on two-dimensional points.  
Our data structure can find approximate mode in the case when the query range is bounded on three sides (see Appendix \ref{appendix} for the proof).
\begin{corollary}
  \label{cor:3sid}
There exists a data structure that supports three-sided two-dimensional approximate range mode queries in $O(\log m)$ time and uses $O(m \log m)$ words of $\log n$ bits, where $m = \min(n \lg n/\eps, n/\eps^2)$.
\end{corollary}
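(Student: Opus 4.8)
The plan is to reduce the static three-sided query to the insertion-only version of the dynamic structure of the preceding theorem, via a sweep over the unbounded dimension. Assume without loss of generality that the query range is $[a,b]\times(-\infty,t]$, so that it is bounded on the left, right, and top but open below. I would sort the $n$ input points by their $y$-coordinate and sweep a horizontal line upward, inserting each point (keyed by its $x$-coordinate) into the dynamic structure as the line passes it. Because we only ever insert, the sets $\cI_{j,x}$ grow monotonically in a strong sense: once an interval at level $j$ is created, spanning some $x$-range and containing at least $\delta^j$ points of color $x$ whose $y$-coordinates all lie below the sweep height $t_c$ at which it was built, it continues to contain at least $\delta^j$ such points for every later height $t\ge t_c$. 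Hence every interval we ever build remains a valid witness at all subsequent times, and no deletion of intervals is ever required for correctness.

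Exploiting this monotonicity, I would tag each interval $I_x(s_k,e_k)$ at level $j$ with the sweep height $t_c$ at which it was created and treat it as a static point $(l,r,j,t_c)$ carrying color $x$, where $l,r$ are the actual $x$-coordinates of its endpoints. A three-sided query $(a,b,t)$ then becomes: among all tagged intervals with $a\le l$, $r\le b$, and $t_c\le t$, return the color of one attaining the maximum $j$. This is precisely the query answered by $\cD$ in Lemma~\ref{lemma:3d3side} with one extra dominance constraint $t_c\le t$, which I would incorporate by processing the intervals in increasing order of $t_c$ and making $\cD$ partially persistent; a query is then answered by locating the version corresponding to $t$ and performing a single query into that version. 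Correctness carries over verbatim from the dynamic case: if the query contains an interval at level $j$ then the associated color occurs at least $\delta^j$ times among the points with $y\le t$, while the invariants that held at height $t$ (a subset of the intervals present in that version) guarantee that any color occurring at least $\delta^{j+3}$ times forces a contained interval at level $j+1$. Thus the reported $j$ pins down the true mode frequency up to the factor $\delta^3=1+\eps$.

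For the query time, locating the version costs $O(\lg m)$ and the query into $\cD$ costs $O(\lg m/\lg\lg m)$ by Lemma~\ref{lemma:3d3side}, yielding the claimed $O(\lg m)$ bound. The main obstacle is the space analysis. The intervals alive at any single height number $O(m)$, matching the dynamic bound, but a faithful accounting of the persistent (equivalently, the one-higher-dimensional static dominance) structure must charge for every interval ever created, and a single insertion can trigger rebuilds producing $\Theta(1/\eps'^2)$ intervals across all levels. I would therefore bound the total number of intervals created over the entire sweep by a potential argument mirroring the rebuild analysis of the dynamic structure, and then argue that, stored with $O(\lg m)$-bit coordinates, the whole collection occupies $O(m\lg m)$ words of $\lg n$ bits. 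Verifying that this cumulative count — together with the persistence overhead — stays within a $\Theta(\lg m)$ factor of the $O(m)$ live points is the delicate part, and is where I expect the real work to lie.
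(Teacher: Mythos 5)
Your approach is the same one the paper takes: sweep the unbounded coordinate, feed the points to the insertion-only dynamic structure, and make the resulting sequence of versions of $\cD$ persistent so that a three-sided query becomes a one-dimensional query against the version at height $t$. Your monotonicity observation (old intervals remain valid lower-bound witnesses, while the two-sided correctness argument only needs the invariants to hold among the intervals live in version $t$) is sound and is in fact slightly more explicit than anything the paper writes down. The one place you stop short --- ``the delicate part, where I expect the real work to lie,'' namely charging for every interval ever created plus the persistence overhead --- is precisely the step the paper discharges by citation rather than by a bespoke potential argument: it invokes Dietz's offline partial-persistence transformation, which converts a dynamic structure with update time $u$ and query time $q$ into an offline partially persistent one with query time $O(q \log\log n)$ and space $O(N \cdot u)$ words over $N$ updates. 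Applied to $\cD$, whose total number of interval insertions/deletions over the whole sweep is $O(m)$ (this is exactly the amortized rebuild count already established in the dynamic section), each costing $O(\lg m/\lg\lg m)$, this yields $O(m\lg m)$ words and $O(\lg m)$ query time in one stroke. So you have not taken a wrong turn; you have re-derived the reduction and correctly located the only nontrivial accounting step, but to close the argument you should either cite Dietz's technique as the paper does, or carry out the cumulative interval count yourself --- and note that the count you need is of updates to $\cD$, not of the full $O(\lg n/\eps^2)$ per-insertion work, since applying a persistence space bound to the latter would overshoot $O(m\lg m)$ when $1/\eps \gg \lg n$.
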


\section{Approximate Range Median and Range Selection}
In this section we present solutions to approximate range selection queries. As discussed previously, a range selection query takes two indices $a,b$ of a sequence $c_1, \ldots, c_n$ and must return the index of an element $x$ whose rank in $c_{a..b}$ is between $k - \alpha (b-a+1)$ and $k + \alpha (b-a+1)$. We study two variants. In the first variant, the rank $k$ is supplied prior to construction of the data structure. In the second variant, we allow $k$ to be specified at query time. Here rank is defined so the $i$th smallest element in the range has rank $i$. We also support a specific $k$ depending on the size of the range, i.e. $f(b-a+1) = \ceil{(b-a+1)/2}$, which is the range median problem. We make the restrictions $f(x) \leq f(x+1) \leq f(x)+1$ and $1 \leq f(x) \leq x$.

\begin{theorem}
\label{theor:median2}
Any one-dimensional approximate range median data structure requires $\Omega(n)$ bits.
\end{theorem}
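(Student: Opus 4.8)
The plan is to prove this space lower bound by an information-theoretic counting argument in the encoding model. I would exhibit a family $\mathcal{F}$ of $2^{\Omega(n)}$ input sequences and argue that no two members of $\mathcal{F}$ can share the same encoding; since a correct encoding determines all query answers without access to the input, an injection from $\mathcal{F}$ into the set of encodings forces any data structure to use at least $\log_2|\mathcal{F}| = \Omega(n)$ bits. The entire difficulty is to make each member of $\mathcal{F}$ recoverable from query answers alone, and to do so robustly for \emph{every} approximation ratio $\alpha \in (0,1/2)$ rather than only for small $\alpha$.

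The key observation I would exploit is that an $\alpha$-approximate median query on a range of size $s=2$ is in fact \emph{exact} for every $\alpha<1/2$. For such a range the target rank is $\lceil 2/2\rceil = 1$, and a valid answer must have rank in $[\,1-2\alpha,\ 1+2\alpha\,]$. Since $0<\alpha<1/2$ gives $1+2\alpha<2$ and $1-2\alpha>0$, the only integer rank in this interval is $1$, so the query is forced to return the position of the \emph{smaller} of the two elements. This lets me encode one bit in each consecutive pair of positions. Concretely, I take $n$ even and let $\mathcal{F}$ consist of all sequences formed by concatenating $n/2$ blocks, where block $i$ occupies positions $2i-1,2i$ and equals either $(0,1)$ (encoding bit $0$) or $(1,0)$ (encoding bit $1$); these are $2^{n/2}$ distinct sequences. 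By the computation above, the query on range $[2i-1,2i]$ returns position $2i-1$ when the block is $(0,1)$ and position $2i$ when it is $(1,0)$, so block $i$'s bit is read off directly from the returned position.

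I would then run the injectivity argument. Suppose two distinct $c,c'\in\mathcal{F}$ produced identical encodings; they differ in some block $i$, say $c$'s block is $(0,1)$ and $c'$'s is $(1,0)$. On the query $[2i-1,2i]$ the data structure returns a single position determined solely by the stored bits, yet this position must simultaneously be a valid answer for $c$ (forcing it to be $2i-1$) and for $c'$ (forcing it to be $2i$), a contradiction. Hence the map from $\mathcal{F}$ to encodings is injective, at least $n/2$ bits are required, and the theorem follows.

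The step I expect to be the crux is precisely the robustness to large $\alpha$. A naive per-bit gadget that shifts the median window onto a designated bit region fails as $\alpha\to 1/2$, because the window width $2\alpha s$ grows to nearly $s$ and can no longer be confined to a small, bit-specific region; restricting to size-$2$ ranges sidesteps this entirely, since there the window can never reach rank $2$ for any $\alpha<1/2$. I would note finally that the same pair gadget yields the companion $\Omega(n)$ bound for $\alpha$-approximate range selection, since a size-$2$ range with selection rank $k=1$ produces the identical window $[\,1-2\alpha,\ 1+2\alpha\,]$ and hence the same forced answer.
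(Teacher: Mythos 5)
Your proposal is correct and takes essentially the same route as the paper: the paper also partitions the sequence into $n/2$ size-$2$ blocks, each a permutation of two distinct values, and argues that an $\alpha$-approximate median query on a block is forced to be exact for $\alpha<1/2$, yielding $2^{n/2}$ distinguishable inputs and the $\Omega(n)$ bound. Your explicit computation showing that the rank window $[1-2\alpha,\,1+2\alpha]$ contains only the integer $1$ is the justification the paper leaves implicit.
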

We delay the proof of Theorem \ref{theor:median2} to Appendix \ref{appendix}. 


\subparagraph*{Fixed Rank $f(b-a+1)=k$ Selection.}

We first address the range selection variant with a fixed rank $f(b-a+1)=k$. We use a similar approach to the one in Lemma~\ref{theorem:mode4}. We again assume $n$ is a power of $2$. At the top level, we store values $m_{n/2,i,j}$ ($1 \leq i,j \leq \lceil \lg_{1+\alpha} n \rceil$). Let $r_i = n/2 - (1+\alpha)^i$ and $s_j = n/2+1 + (1+\alpha)^j$. Then $m_{n/2,i,j}$ is the element of rank $f(s_j - r_i + 1)$ in the range $c_{\lfloor r_i\rfloor..\lceil s_j\rceil}$. We then build the structure recursively on the left and right halves of the full range.

Given a query range $c_{a..b}$, we find the appropriate element $m_{t,i,j}$ where $a \leq t$, $t+1 \leq b$, and $i$ and $j$ are largest possible satisfying $a \leq r_i$ and $s_j \leq b$. We return $m_{t,i,j}$.

\begin{lemma}
\label{staticrlemma}
The above data structure returns an $\alpha$-approximate fixed-rank $k$ element of any query range $c_{a..b}$.
\end{lemma}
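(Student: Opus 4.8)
The plan is to show that the returned element $m_{t,i,j}$ is the \emph{exact} rank-$\rho$ element of a sub-range $c_{a'..b'}$ that is sandwiched inside the query range $c_{a..b}$ and differs from it by only an $O(\alpha)$ fraction of positions, and then to convert this into the claimed guarantee using two elementary monotonicity facts. First I would pin down the sub-range. As in Lemma~\ref{theorem:mode4}, the recursion selects the unique level whose midpoint $t$ is straddled by the query, i.e. $a \le t$ and $t+1 \le b$ (located via the highest set bit of $a \oplus b$); the checkpoints are then $r_i = t-(1+\alpha)^i$ and $s_j = t+1+(1+\alpha)^j$, and the query picks the largest $i,j$ with $a \le r_i$ and $s_j \le b$. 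Setting $a' = \lfloor r_i\rfloor$ and $b' = \lceil s_j\rceil$, the fact that $a,b$ are integers gives $a \le a' \le r_i$ and $s_j \le b' \le b$, so $c_{a'..b'} \subseteq c_{a..b}$, and $m_{t,i,j}$ is the element of rank $\rho := f(S')$ in $c_{a'..b'}$, where $S' = b'-a'+1$ is the stored range size. (A degenerate case, where an endpoint is too close to $t$ for a valid $i$ or $j$, is absorbed by the padding assumption.)

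Next I would bound how much $c_{a'..b'}$ shrinks the query on each side. Write $E_L = a'-a$ and $E_R = b-b'$. By maximality of $i$ we have $a > r_{i+1}$, so $E_L \le r_i - a < r_i - r_{i+1} = \alpha(1+\alpha)^i$; symmetrically $E_R < \alpha(1+\alpha)^j$. Since the stored range satisfies $S' \ge (1+\alpha)^i + (1+\alpha)^j$, these combine to $E_L + E_R < \alpha\bigl((1+\alpha)^i + (1+\alpha)^j\bigr) \le \alpha S' \le \alpha S$, where $S = b-a+1 = S' + E_L + E_R$. This is the geometric-spacing step, which converts the multiplicative gap between consecutive checkpoints into an additive $\alpha S$ error budget.

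Finally I would combine two monotonicity facts. Expanding a range from $c_{a'..b'}$ to $c_{a..b}$ can only raise the rank of any fixed element, and by at most the number of added positions; hence the rank of $m_{t,i,j}$ in $c_{a..b}$ lies in $[\rho,\ \rho + E_L + E_R]$. Because $f$ is non-decreasing and $1$-Lipschitz ($f(x)\le f(x+1)\le f(x)+1$) and $S'\le S$, we also get $\rho = f(S') \le f(S) = k$ and $k - \rho = f(S)-f(S') \le S-S' = E_L + E_R$. Putting these together, the returned rank is at least $\rho \ge k - (E_L+E_R) > k - \alpha S$ and at most $\rho + E_L + E_R \le k + (E_L+E_R) < k + \alpha S$, which is exactly the $\alpha$-approximation guarantee.

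The step I expect to be most delicate is ensuring that the two error sources, the rank shift from restricting the range and the change of target rank from $f(S)$ to $f(S')$, are each controlled by the \emph{same} one-sided budget $\alpha S$ rather than compounding into $2\alpha S$; the resolution is that both are bounded by $E_L+E_R$ and, crucially, both displace the rank in the same direction ($\rho \le k$), so they do not add. The remaining work is bookkeeping: the floor/ceil rounding of $r_i,s_j$ (which only helps, since it keeps $c_{a'..b'}$ inside $c_{a..b}$ and perturbs $S'$ by $O(1)$, easily absorbed by the $\Theta(\alpha)$ slack in the bound on $E_L+E_R$) and the small-$i,j$ boundary cases near $t$.
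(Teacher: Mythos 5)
Your proof is correct and takes essentially the same route as the paper: bound the left and right overhangs $r_i-a$ and $b-s_j$ by $\alpha(1+\alpha)^i$ and $\alpha(1+\alpha)^j$ via maximality of $i,j$, sum them to at most $\alpha(b-a+1)$, and translate this into a rank error. You are in fact more careful than the paper at one point: you explicitly separate the two error sources (the rank shift from enlarging $c_{a'..b'}$ to $c_{a..b}$, and the change of target rank from $f(S')$ to $f(S)=k$) and use the assumptions $f(x)\le f(x+1)\le f(x)+1$ to show both are bounded by $E_L+E_R$ and point the same way, so they do not compound to $2\alpha S$ --- a step the paper's proof leaves implicit.
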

\begin{proof}
Let $x=r_i-a$ and $y=b-s_j$. Consider the size of $x$. If we let $z = (1+\alpha)^i$, then $x+z < (1+\alpha)z$. It follows $x < \alpha z$. Since $z \leq (t-a+1)$, and applying similarly for $y$, we can show $x < \alpha(t-a+1)$ and $y < \alpha(b-t)$.
The elements in the ranges represented by $x$ and $y$ shift the true rank $k$ element of $c_{a..b}$ at most $x+y < \alpha(b-a+1)$ ranks from $m_{t,i,j}$. It follows that $m_{t,i,j}$ is an $\alpha$-approximate rank $k$ element for range $s_a, \ldots, s_b$.
\end{proof}

As for Theorem~\ref{theorem:mode4}, to find the level to query, we find the highest set bit of $a$ XOR $b$, then find the appropriate index $m_{t,i,j}$ via arithmetic. In total, the query takes constant time.

We now analyze the space required. At the top level, we use $O(\lg^2_{1+\alpha} (n) \cdot \lg n)$ bits, which is equal to $O(\frac{\lg^3 n}{\lg^2(1+\alpha)}) = O(\lg^3 n / \alpha^2)$ bits. Therefore our recurrence is $S(n) = 2S(n/2) + O(\lg^3 n / \alpha^2)$. The recursion tree is leaf-heavy, with total space amounting to $O(n/\alpha^2)$ bits.

\begin{theorem}
There exists an $O(n/\alpha^2)$-bit data structure that supports one-dimensional $\alpha$-approximate fixed-rank $f(b-a+1)=k$ selection queries in constant time. 
\end{theorem}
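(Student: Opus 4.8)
The plan is to assemble three ingredients that are already in place: the recursive layout storing the precomputed selection values $m_{t,i,j}$, the correctness guarantee of Lemma~\ref{staticrlemma}, and the constant-time navigation technique of Lemma~\ref{theorem:mode4}. Recall the structure is defined top-down: the node responsible for a block of size $N$ with midpoint $t$ stores the $O(\lg^2_{1+\alpha} N)$ values $m_{t,i,j}$, and the construction then recurses on the two halves. With the structure fixed, only correctness, query time, and total space remain to be checked.

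First I would dispatch correctness by directly invoking Lemma~\ref{staticrlemma}. For any query range $c_{a..b}$, the unique recursion level at which the range first straddles a block midpoint $t$ gives a node whose stored table contains the relevant $m_{t,i,j}$; choosing $i$ and $j$ maximal with $a \le r_i$ and $s_j \le b$ returns an element whose rank in $c_{a..b}$ lies within $\alpha(b-a+1)$ of the target rank $k = f(b-a+1)$. Since nothing but the stored tables is consulted, the structure is a valid encoding and needs no access to the underlying sequence.

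Next I would bound the query time by reusing the argument of Lemma~\ref{theorem:mode4}. The level at which the range straddles a midpoint is governed by the most significant set bit of $a \oplus b$, computable in $O(1)$ time on the word RAM; from it I recover $t$ and the block size by arithmetic, compute the maximal $i$ and $j$ directly (each a logarithm base $1+\alpha$ of an offset, again $O(1)$ to evaluate), and index into the stored table. Hence each query costs constant time.

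The step I expect to be the main obstacle is the space analysis, since the per-node cost carries a $\lg^3$ factor that must be shown not to inflate the final bound. A node of block size $N$ stores $O(\lg^2_{1+\alpha} N)$ indices of $O(\lg n)$ bits each, i.e.\ $O(\lg^3 N/\lg^2(1+\alpha))$ bits; using the Taylor estimate $\lg(1+\alpha) = \Theta(\alpha)$ for $\alpha \in (0,1/2)$, this is $O(\lg^3 N/\alpha^2)$, giving the recurrence $S(n) = 2S(n/2) + O(\lg^3 n/\alpha^2)$. The key point is that this recursion is leaf-heavy: a depth-$d$ level holds $2^d$ nodes of size $s = n/2^d$, contributing $(n/s)\cdot O(\lg^3 s/\alpha^2)$ bits, so the grand total collapses to $(n/\alpha^2)\sum_{i \ge 0} i^3/2^i$, a convergent series. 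This yields $O(n/\alpha^2)$ bits overall, and combining it with the two preceding parts establishes the theorem.
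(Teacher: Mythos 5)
Your proposal is correct and follows essentially the same route as the paper: correctness via Lemma~\ref{staticrlemma}, constant-time navigation via the XOR/highest-set-bit trick of Lemma~\ref{theorem:mode4}, and the leaf-heavy recurrence $S(n) = 2S(n/2) + O(\lg^3 n/\alpha^2)$ for the space bound. Your explicit evaluation of the level-by-level sum $\sum_{i\ge 0} i^3/2^i$ just makes the paper's "leaf-heavy" claim more concrete.
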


\subparagraph*{Online Rank $k$ Selection.}

Our data structure from the previous section can be adapted to support queries that specify the rank $k$ at query time. We again assume $n$ is a power of $2$. Let $\delta = 1+\alpha/2$. At the top level we now store values $m_{n/2,i,j,l}$ ($1 \leq i, j, \leq \ceil{\lg_{\delta} n}$, $0 \leq l \leq \floor{1/\alpha}$). Again, we let $r_i = n/2 - \delta^i$ and $s_j = n/2 + 1 + \delta^j$. However, this time, $m_{n/2,i,j,l}$ represents the element of rank $q_l = l\alpha \cdot (s_j-r_i+1)+1$ in  $c_{r_i..s_j}$. As $q_l$ may be fractional, for simplicity we just store both rank $\floor{q_l}$ and $\ceil{q_l}$ elements. We build this structure recursively on both halves of the full range.

Given a query $s_{a..b}$, we again find the appropriate element $m_{t,i,j,l}$ where $a \leq t$, $t+1 \leq b$, $i$ and $j$ are largest possible satisfying $a \leq r_i$ and $s_j \leq b$, and $l$ is chosen so $q_l$ is as close to $k$ as possible. We return $m_{t,i,j,l}$.

\begin{lemma}
The above data structure returns an $\alpha$-approximate rank $k$ element of any query range $c_{a..b}$ and specified rank $k$.
\end{lemma}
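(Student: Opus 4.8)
The plan is to mirror the proof of Lemma~\ref{staticrlemma}, splitting the total rank error into a \emph{boundary} contribution (from replacing the query range $c_{a..b}$ by the stored range $c_{r_i..s_j}$) and a \emph{discretization} contribution (from approximating the desired rank by one of the finitely many stored ranks $q_l$), and bounding each by $(\alpha/2)(b-a+1)$ so that their sum stays within the allowed $\alpha(b-a+1)$. The reason the construction uses $\delta = 1+\alpha/2$ rather than $1+\alpha$ is precisely to cut the boundary budget in half and leave room for the new discretization term.

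First I would reproduce the boundary estimate of Lemma~\ref{staticrlemma} with the new base. Writing $x = r_i - a$ and $y = b - s_j$ and letting $z = \delta^i$ for the largest $i$ with $a \le r_i$, maximality gives $x + z < \delta z = (1+\alpha/2)z$, hence $x < (\alpha/2)z \le (\alpha/2)(t-a)$, and symmetrically $y < (\alpha/2)(b-t)$, so $x + y < (\alpha/2)(b-a+1)$. Since $c_{r_i..s_j}$ differs from $c_{a..b}$ only in the $x$ elements chopped on the left and the $y$ chopped on the right, the full-range rank of any fixed element exceeds its rank in $c_{r_i..s_j}$ by a value in $[0, x+y]$; in particular this shift is one-directional, which I expect to be useful on one side of the window.

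Next I would handle the grid. The stored ranks $q_l = l\alpha(s_j - r_i + 1) + 1$ are spaced $\alpha(s_j - r_i + 1) \le \alpha(b-a+1)$ apart, so for a target interior to the grid the nearest $q_l$ is within half a spacing, i.e.\ $(\alpha/2)(s_j - r_i + 1) \le (\alpha/2)(b-a+1)$ of it; the $\pm 1$ incurred by storing $\lfloor q_l\rfloor$ and $\lceil q_l\rceil$ is absorbed by the strict inequalities. Combining with the one-directional boundary shift, the returned element's rank in $c_{a..b}$ would then differ from $k$ by less than $(\alpha/2)(b-a+1) + (\alpha/2)(b-a+1) = \alpha(b-a+1)$, as required. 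Locating the correct recursion level (and thus a midpoint $t$ with $a \le t < b$ for which such a stored range exists) is done exactly as in Lemma~\ref{staticrlemma}, via the highest set bit of $a \oplus b$.

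The step I expect to be the main obstacle is the discretization bound at the top of the rank grid. When the target rank exceeds the largest stored value $q_{\lfloor 1/\alpha\rfloor}$, which lies a full $\alpha(s_j-r_i+1)$ below the maximal rank $s_j - r_i + 1$ rather than merely half a spacing, the nearest available $q_l$ may be farther than $(\alpha/2)(b-a+1)$ from the target, so the clean half-spacing estimate does not apply verbatim. Resolving this requires a dedicated argument for large $k$: there the admissible window $[k-\alpha(b-a+1),\,k+\alpha(b-a+1)]$ is effectively one-sided (capped at $b-a+1$), so only a lower bound on the returned rank is needed, and the one-directional boundary shift can be enlisted to cover the gap. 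Verifying that the boundary slack and the grid endpoints combine favorably rather than adversarially in exactly this corner case---possibly by slightly enlarging the range of $l$---is the crux of the argument.
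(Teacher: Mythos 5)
Your decomposition into a boundary term and a discretization term, each bounded by $(\alpha/2)(b-a+1)$, is exactly the paper's proof: it reuses the estimate $x+y<\alpha(b-a+1)/2$ from Lemma~\ref{staticrlemma} (this is precisely why the base is $\delta=1+\alpha/2$), and then observes that consecutive stored ranks are at most $\alpha(s_j-r_i+1)\le\alpha(b-a+1)$ apart, so the nearest $q_l$ is within half a spacing of $k$; summing the two contributions gives the claimed $\alpha(b-a+1)$ bound.

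The obstacle you single out at the top of the rank grid is real, and the paper's proof does not address it: the half-spacing bound $|q_l-k|<\lfloor\alpha(b-a+1)/2\rfloor$ is justified only for $q_0\le k\le q_{\lfloor 1/\alpha\rfloor}$, and when $1/\alpha$ is not an integer the largest stored rank $q_{\lfloor 1/\alpha\rfloor}=\lfloor 1/\alpha\rfloor\alpha(s_j-r_i+1)+1$ can sit almost $\alpha(s_j-r_i+1)$ below the maximal rank, so for $k$ close to $b-a+1$ the combined error can approach $(3\alpha/2)(b-a+1)$. Of your two suggested repairs, the one that works is enlarging the range of $l$: taking $0\le l\le\lceil 1/\alpha\rceil$ forces $q_{\max}\ge s_j-r_i+1$ at negligible extra space, after which the half-spacing bound holds for every admissible $k$ and your argument closes completely. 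The one that does not work is enlisting the one-directional boundary shift, because that shift lies anywhere in $[0,x+y]$ and can be $0$ in the worst case (all $x+y$ discarded elements may exceed the returned element), so it cannot be relied upon to push the returned rank upward toward a large $k$. With the enlarged grid, your write-up is, if anything, more careful than the paper's.
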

\begin{proof}
Again let $x=r_i-a$ and $y=b-s_j$. For the same reasons as in the proof of Lemma \ref{staticrlemma}, we have $x + y < \alpha (b-a+1)/2$.

There are no more than $\alpha \cdot (s_j-r_i+1) \leq \alpha \cdot (b-a+1)$ ranks between each consecutive $q_l$ and $q_{l+1}$. Thus our chosen $q_l$ satisfies $|q_l-k| < \floor{\alpha(b-a+1)/2}$. It follows that $m_{t,i,j,l}$ is no more than $\alpha \cdot (b-a+1)$ ranks away from the true rank $k$ element in range $c_{a..b}$.
\end{proof}

The query time follows as in the previous section. However, we must account for the additional space usage. Our recurrence is now $T(n) = 2T(n/2) + O(\lg^3 n/\alpha^3)$, from the additional $1/\alpha$ factor in the space cost at each level. This totals to $O(n/\alpha^3)$ bits.

\begin{theorem}
There exists an $O(n/\alpha^3)$-bit  data structure that supports one dimensional $\alpha$-approximate online rank $k$ selection queries in constant time. 
\end{theorem}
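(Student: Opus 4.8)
The plan is to assemble three facts about the construction described above: correctness of the element returned, constant query time, and the $O(n/\alpha^3)$-bit space bound. Correctness is already in hand from the preceding lemma, which shows the returned element $m_{t,i,j,l}$ has rank within $\alpha(b-a+1)$ of the queried rank $k$; so nothing further is needed there. For the query time I would argue exactly as in the fixed-rank case: the only nontrivial step is locating the recursion level whose stored table intersects the query, which is the level determined by the highest set bit of the bitwise XOR of $a$ and $b$ and is computable in constant time on the word RAM. Once that level and its midpoint $t$ are known, the indices $i$ and $j$ (the largest with $a \le r_i$ and $s_j \le b$) and the index $l$ (minimizing $|q_l - k|$) are recovered by a constant amount of arithmetic, since $r_i$, $s_j$, and $q_l$ are explicit functions of $i$, $j$, $l$. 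Hence each query costs $O(1)$ time.

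The substantive part is the space analysis, and this is where the extra factor of $1/\alpha$ relative to the fixed-rank structure appears. At the top level we store $m_{n/2,i,j,l}$ over $1 \le i,j \le \ceil{\lg_\delta n}$ and $0 \le l \le \floor{1/\alpha}$, keeping two indices (for $\floor{q_l}$ and $\ceil{q_l}$) per triple, each index occupying $O(\lg n)$ bits. Thus the number of triples is $O(\lg_\delta^2 n \cdot (1/\alpha))$. I would bound $\lg_\delta n = \lg n / \lg(1+\alpha/2)$ via the Taylor estimate $\lg(1+\alpha/2) = \Theta(\alpha)$, valid for $\alpha \in (0,1/2)$, giving $\lg_\delta n = O(\lg n/\alpha)$ and therefore a top-level cost of $O\big((\lg^2 n/\alpha^2)\cdot(1/\alpha)\cdot \lg n\big) = O(\lg^3 n/\alpha^3)$ bits. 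This yields the recurrence $T(n) = 2T(n/2) + O(\lg^3 n/\alpha^3)$.

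Finally I would solve this recurrence and observe it is leaf-heavy: expanding gives $\sum_{i} 2^{i}\,O(\lg^3(n/2^{i})/\alpha^3)$, and the substitution $k = \lg n - i$ converts the geometric-times-polylogarithmic sum into $(1/\alpha^3)\sum_{k} (n/2^{k})k^3 = O(n/\alpha^3)$, since $\sum_k k^3/2^k$ converges to a constant. Together with the base cases this gives the claimed $O(n/\alpha^3)$-bit total. The main obstacle is keeping the $\alpha$-dependence explicit throughout: correctly charging the extra factor of $1/\alpha$ to the new rank index $l$, and confirming that the per-level polylogarithmic cost still sums, over the balanced recursion, to a leaf-dominated total that is linear in $n$ rather than accumulating a spurious extra logarithmic factor.
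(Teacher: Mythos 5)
Your proposal is correct and follows essentially the same route as the paper: correctness from the preceding lemma, constant query time via the highest-set-bit-of-XOR trick inherited from the fixed-rank case, and a per-level cost of $O(\lg^3 n/\alpha^3)$ bits feeding the recurrence $T(n)=2T(n/2)+O(\lg^3 n/\alpha^3)$, which is leaf-heavy and solves to $O(n/\alpha^3)$. The only difference is that you spell out the convergence of $\sum_k k^3/2^k$, which the paper leaves implicit.
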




\bibliography{../ref}

\newpage

\appendix

\section{Omitted Proofs}
\label{appendix}

\begin{proof}[Proof of Theorem \ref{theor:mode2}]
Using a simple proof we show that $\Omega(n/(1+\eps))$ bits are required for any data structure that answers one-dimensional approximate range mode queries. Here we allow arbitrary $\epsilon$.

Given an approximation factor $1 + \eps$,
divide the sequence $S$ of size $n$ into $\lfloor n / (2k) \rfloor$ {\em full blocks} each of size $2k$, where $k=\lceil 1+\eps \rceil+1$, and, if $n$ is not a multiple of $2k$, a non-full block of size $n\bmod 2k$. 
Denote by $t_1, \ldots, t_{k+1}$ $k+1$ arbitrary, distinct colors.
We say that $S$ satisfies property ($*$) if for each full block $b$ in $S$ one of the following two conditions hold:
\begin{itemize}
\item either $b$ consists of $t_1$ repeated $k$ times followed by $t_2,\ldots,t_{k+1}$,
\item or $b$ consists of $t_2,\ldots,t_{k+1}$ followed by $t_1$ repeated $k$ times.
\end{itemize}
Clearly, the number of sequences that satisfy ($*$) is at least $2^{\lfloor n / (2k) \rfloor}$, since there exist $\lfloor n / (2k) \rfloor$ full blocks in a sequence of size $n$ and each of them can have one of two different values.
Moreover, for any two distinct sequences $S_1$ and $S_2$ satisfying ($*$) differing at full block $b$, 
there exists at least one approximate range mode query, 
namely the query that asks for an approximate mode of $b$, 
that will return different values (either a value from the first $k$ position in the block or from the last $k$ positions of the block).
Thus, the information theoretic lower bound for storing an approximate range mode data structure is $\Omega(\lg 2^{\lfloor n / (2k) \rfloor})=\Omega(\lfloor n / (2k) \rfloor)=\Omega(n)$ bits. 
\end{proof}

\begin{proof}[Proof of Lemma \ref{lemma:3d3side}]
Let $P$ denote the set of points to be stored in our data structure. Here we use $\eps > 0$ independently of the rest of the section.
We start by considering the special case when the second coordinate is bounded by $\lg^{\eps}n$, i.e., $r\le \lg^{\eps} n$ for all $(l,r,j)\in P$.
In this case it is sufficient to store $\lg n$ points for every possible value of $b$:
let $\max_{r,j}$ denote the biggest first coordinate of a point $(l',r',j')$ in $P$ with $r'=r$, $j'=j$ ($\max_{r,j}=\max\{\,l'\,|\, (l',r',j')\in P\, \text{ and }  r'=r, j'=j\}$. 
The answer to a query $(a,b)$ is the largest $j$ that satisfies 
$a \leq \max_{r,j}$ for some $r\le b$. 
We keep all values $\max_{r,j}$ such that $P$ contains at least one point $(l, r, j)$ for some $l$, and store them in increasing order.
We group them in blocks of size $\Theta(\lg^{1-\eps} n)$ and we keep a global lookup table of size $o(n)$ bits that allows answering queries within any possible block.

Also, in a local lookup table of size $O(\lg^{3\eps}n)$ bits we store for each block and every possible value of $r$ the index of the block preceding it which maximizes the value of $j$ given $r$. 
We also store a fusion tree on the values $\max_{r,j}$ so that we can compute the rank of $a$ within these values in constant time.
Given a query, we compute in constant time the block which the predecessor of $a$ belongs to and use table lookup on that block and one other block preceding it to get the answer.
Updates also take constant time since the size of individual blocks and the local lookup table fit in a single word.

A general query can be reduced to the above described special case by using a range tree with node degree $\lg^{\eps}n$ that splits the points on the value of their second coordinate.
Although every point is stored in $O(\lg n/\lg \lg n)$ nodes, our data structure uses linear space. Let $P(u)$ denote the set of points stored in a node $u$. We replace the second coordinate of each point $p\in P(u)$ with the index $i$ of the child node $u_i$ such that $p\in P(u_i)$. We keep the above described special case data structure in every node $P(u)$, but we do not store the set $P(u)$ itself. 
A query interval can be fully covered by $O(\lg n / \lg\lg n)$ tree nodes.
We query the data structure in each one of them and return the maximum value $j$ in $O(\lg n/\lg\lg n)$ time.
Similarly, an update affects the special case data structure in $O(\lg n / \lg\lg n)$ nodes and requires $O(\lg n/\lg\lg n)$ time. 

The total space usage is $O(n \log n)$ bits because we spend $O(\min(\log^{2+\eps}n,|P(u)|\lg n)$ bits in each node $u$ of the range tree. To prove this bound, we classify nodes into low and high nodes. Low nodes are the nodes in the lowest $(1+2/\eps)$ levels of the tree and the rest of the nodes are high nodes. We also store the set of points $P(u)$ in every low node $u$.  Thus we spend $O(|P(u)|\lg n)$ bits in every low node, so the total space consumed by all low nodes is $O((1/\eps)n\lg n)$ bits. We spend $O(\lg^{2+\eps}n)=O(|P(u)|)$ bits in every high node because $|P(u)|\ge \lg^{2+\eps}n$. 
Since the total number of points in all  $P(u)$  is $O(n(\lg n/\lg\lg n))$, the total space consumed by high nodes is $O(n(\lg n/\lg \lg n))$ bits.    
\end{proof}

\begin{proof}[Proof of Corollary \ref{cor:3sid}]
Using the technique introduced by Dietz in~\cite{Dietz89a}, 
we can transform a data structure that supports updates in $u(n)$ time and queries in $q(n)$ time into an offline partially persistent data structure that answers queries in $O(q(n)\cdot \log\log n)$ time and uses $O(n\cdot u(n))$ words of space. 
Using sweep line technique, 
we can transform an offline partially persistent data structure for one-dimensional queries into a static data structure for three-sided queries with the same time and space bounds.
\end{proof}

\begin{proof}[Proof of Theorem \ref{theor:median2}]
Assume $n$ is even. Divide the sequence $S$ of size $n$ to $n / 2$ blocks each of size $2$.
We say that $S$ satisfies property ($*$) if for each block $b$ in $S$ one of the following two conditions hold:
\begin{itemize}
\item either $b$ consists of $\{1, 2\}$,
\item or $b$ consists of $\{2, 1\}$.
\end{itemize}
Clearly, the number of sequences that satisfy ($*$) is $2^{(n / 2)}$ since there exists $n / 2$ blocks in a sequence of size $n$ and each block can have one of two different values.
Moreover for any two distinct sequences $S_1$ and $S_2$ satisfying $(*)$ differing at block $b$, the approximate range selection query must be exact on block $b$, and therefore must return different values.
Thus, the information theoretic lower bound for storing an approximate range median data structure is $\Omega(\lg 2^{(n / 2)})=\Omega(n / 2)=\Omega(n)$ bits. 
\end{proof}

{

\end{document}